\newtheorem{theorem}{Theorem}[section]
\newtheorem{corollary}[theorem]{Corollary}
\newtheorem{lemma}[theorem]{Lemma}
\theoremstyle{definition}
\newtheorem{remark}{Remark}[section]
\newcommand{\R}{{\mathord{\mathbb R}}}
\let\o=\omega
\let\e=\varepsilon
\let\a=\alpha
\let\pa=\parallel
\let\o=\omega
\let\e=\varepsilon
\let\a=\alpha
\title{Ghost effect by curvature in\\ planar Couette flow}
\author{Leif Arkeryd, Raffaele Esposito,  Rossana Marra  and Anne Nouri}
\begin{document}
\date{}
\maketitle
\centerline{\scshape Leif  Arkeryd}
\medskip
{\footnotesize
 \centerline{Chalmers, 41296 Gothenburg, Sweden}}
\medskip
\centerline{\scshape Raffaele  Esposito }
\medskip
{\footnotesize
 \centerline{MeMOCS, Universit\`a dell'Aquila}
   \centerline{Cisterna di Latina, 04012 LT, Italy }
 }
 \medskip
\centerline{\scshape Rossana  Marra}
\medskip
{\footnotesize
 \centerline{ Dipartimento di Fisica and Unit\`a INFN, Universit\`a di
Roma Tor Vergata}
 \centerline{00133 Roma, Italy}
}
\medskip
\centerline{\scshape Anne  Nouri}
\medskip
{\footnotesize
 \centerline{LATP,  Universit\'e d'Aix-Marseille I, Marseille, France}
}
\bigskip
\begin{abstract}
We study a rarefied gas, described by the Boltzmann equation, between two coaxial rotating cylinders in the small Knudsen number regime. When the radius of the inner cylinder is suitably sent to infinity, the limiting evolution is expected to converge to a modified Couette flow which keeps memory of the vanishing curvature of the cylinders ({\it ghost effect} \cite{So1}). In the $1$-d stationary case we prove the existence of a positive isolated $L_2$-solution to the Boltzmann equation and its convergence.
This is obtained by means of a truncated bulk-boundary layer expansion which requires the study of a new Milne problem,  and an estimate of the remainder based on a generalized spectral inequality.
\end{abstract}

\section{Introduction}
\setcounter{equation}{0}
\def\theequation{1.\arabic{equation}}
This is a revised version of the original paper \cite{AEMN0}, where a term was missing in the spectral inequality.
It is well known (see for example \cite{C,So1,EP} and references quoted therein) that the asymptotic behavior of the solutions to the Boltzmann equation, in the limit of small Knudsen numbers, is well approximated by the compressible Euler equation for a perfect gas, while the viscosity and heat conducting effects are seen as first order corrections in the Knudsen number. To get finite size viscosity effect is more delicate because of the von Karman relation \cite{vK} between the Reynolds, Knudsen and Mach numbers, $Re$, $Kn$, $Ma$:
$$Ma\sim Re Kn.$$
When $Kn$ is small, either $Re^{-1}$ or $Ma$ have to be small. Therefore, finite viscosity effects are attained only if one assumes that the Mach number is of the same order as the Knudsen number.

With the extra assumption that density and temperature profiles differ from constants at most for terms of the order of the Knudsen number, it is then possible to show that the asymptotic behavior of the solutions to the Boltzmann equation is well approximated by the incompressible Navier-Stokes equations (INS), in the sense that the average velocity field, rescaled by the Mach number, converges to a solution $u$ to INS. Moreover, the first order correction to the temperature profile converges to a solution to the heat equation with a convective term due to the rescaled velocity field $u$. Such results have been proved in several papers. An overview is provided in \cite{EP}, to which we refer for a partial list of references on the subject. We also stress that the asymptotic behavior of the solutions to the compressible Navier-Stokes-Fourier equations, in the low Mach number limit and with the same assumption on density and temperature, is the same.

When the density and temperature do not satisfy the above mentioned assumptions, the Boltzmann equation deviates from the compressible Navier-Stokes-Fourier equations. Such a discrepancy, called {\it ghost effect} \cite{So1}, is the issue we want to address in this paper.

The name is suggested by the fact that a small velocity field produces finite size modifications of the usual heat equation. These modifications are confirmed many by numerical experiments. There has been a big theoretical, numerical and experimental work on this  and for the details we refer to \cite{So1, So2}. We remark that the time dependent equations on the torus, in the diffusive space-time scaling were written in \cite{DEL} and analyzed in \cite{B}, where it is observed that similar equations had been previously considered in some special cases \cite{Ko}. Very little is known from the mathematical point of view, and the only rigorous result we are aware of is \cite{Br}. By using the techniques illustrated in the present paper, it is possible to deal with the time dependent problem in the torus, but that will be the subject of future works.

In this paper instead, we want to study a different type of ghost effect,  pointed out in \cite{SD,So1} as {\it ghost effect by curvature}. It consists in the following: if one looks at the Couette flow  between two coaxial rotating cylinders in the limit when the radius of the inner cylinder goes to infinity, one expects to obtain, as asymptotic behavior, the standard planar Couette flow. The analysis based on the Boltzmann equation does not confirm this. Indeed, an extra term appears in the limiting equations, which is reminiscent of the original structure of the problem. Hence, an infinitesimal curvature produces a finite size discrepancy, the ghost effect by curvature. The extra term gives rise to a bifurcation of the laminar stationary solution, which is absent in the standard Couette flow.

\vskip.2cm
To be more specific, we look at the behavior of a rarefied gas between two coaxial cylinders of radius $L$ and $L+1$, described by the Boltzmann equation in the diffusive space-time scaling.  In cylindrical coordinates $(r,\theta, z)\in (L,L+1)\times [0,2\pi)\times \mathbb{R}$ it is written as
\begin{equation}
\e\frac{\partial F}{\partial t}+\ v_r\frac{\partial F}{\partial r}+ \frac{v_\theta}{r}\frac{\partial F}{\partial \theta}+ v_z\frac{\partial F}{\partial z} +\frac{v_\theta}{r }\left(v_\theta\frac{\partial F}{\partial v_r}-v_r\frac{\partial F}{\partial v_\theta}\right)=\frac{1}{\e}Q(F,F),
\end{equation}
where the positive and normalized function $F(r,\theta,z,v_r,v_\theta, v_z,t)$ is the probability density of particles in cylindrical coordinates and we have denoted by $\e$  the Knudsen number. $(v_r,v_\theta,v_z)$ are the components of the velocity in the local basis associated to the cylindrical coordinates and $Q(f,g)$ is the Boltzmann collision integral for hard spheres:
\begin{equation}Q(f,g)(v)= \frac{1}{2}\int_{\mathbb{R}^3}d\/v_*\int_{S_2}d\/n
B(n,v-v_*)\big\{f'_*g'+f'g'_*-f_*g -g_*f\big\},\end{equation}
with $f', f'_*,f,f_*$ standing for $f(v'), f(v'_*),f(v), f(v_*)$ respectively, $S_2=\{n\in\mathbb{R}^3\,|n^2=1\}$, $B$ is the differential cross section $2B(n,V)=|V\cdot n|$ corresponding to hard spheres, and $v$, $v_*$ and $v'$,$v'_*$ are  post-collisional  and pre-collisional  velocities in an elastic collision:
\begin{equation}
v'=v-n(v-v_*)\cdot n,\quad v'_*=v_*+n(v-v_*)\cdot n.\end{equation}

The use of cylindrical coordinates produces a force-like term depending on the velocity, which will be referred below as {\it centrifugal force}.

We will look at the above equation in the planar limit, where one takes the radius of the inner cylinder $L$ to infinity. A convenient change of variables is the following:
$$r={L}+\frac{y+\pi}{2\pi};\quad  {L}\theta=-x;\quad v_r=v_y;\quad v_\theta=-v_x,$$
while the $z$ variable is unchanged. For simplicity, we assume the distribution $F$ invariant under rotations around the axis of the cylinders, so we drop the dependence on $\theta$. Moreover, we scale  $1/L$ proportionally to the inverse of the square of the Knudsen number:
\begin{equation}
\frac 1 L= \frac{\e^2}{c^2},\end{equation}
with a constant $c$, related to the curvature,  which will be specified below.
With these assumptions the equation becomes
\begin{equation} \label{basic}
\e\frac{\partial F}{\partial t}+\ v_y\frac{\partial F}{\partial y}+ v_z\frac{\partial F}{\partial z} +\frac{\e^2}{ c^2} \sigma(y){v_x}\left(v_x\frac{\partial F}{\partial v_y}-v_y\frac{\partial F}{\partial v_x}\right)=\frac{1}{\e}Q(F,F),
\end{equation}
with
\begin{equation}\label{sigma}\sigma(y)=\frac{2\pi}{2\pi+\frac{\e^2}{ c^2} (y+\pi)}\/\  .\end{equation}
The variable $y$ varies between $-\pi$, corresponding to the inner cylinder, and $\pi$ corresponding to the outer cylinder.

The boundary conditions on the two cylinders are assumed to be given by the diffuse reflection condition, meaning that the distribution of the incoming particles is Maxwellian:
\begin{eqnarray}\label{bc}& &F(-\pi,z,v,t)=\alpha_-(F)\tilde M_-, \quad v_y>0,\nonumber\\
\\
& & F(\pi,z,v,t)=\alpha_+(F)\tilde M_+, \quad v_y<0.\nonumber
\end{eqnarray}
We use the following notation: for $\rho>0$, $T>0$ and $u\in \mathbb{R}^3$
\begin{equation}\label{maxwellian} M(\rho,T, u;v)= \frac{\rho}{(2\pi  T)^{3/2}}\text{\rm e}^{-\displaystyle{\frac{|v-u|^2}{2T}}}\end{equation}
is the Maxwellian with density $\rho$,  temperature $T$ and mean velocity $u$. In this paper we consider the two cylinders at the same temperature $T=1$ and rotating with velocities $U_-$ and $U_+$. Therefore
\begin{equation} \tilde M_\pm= M(\sqrt{2\pi},1, (U_\pm,0,0);v),\end{equation}
where the density has been fixed so that the normalization condition
\begin{equation}\label{normal}\int_{v_y \gtrless 0}dv |v_y| \tilde M_\pm=1\end{equation}
is satisfied. The constants $\alpha_\pm$, depending on the outgoing flow at the boundaries, are determined by the condition of vanishing net flow in the radial direction:
\begin{equation}\label{zeroflow}\int_{\mathbb{R}^3}dv v_yF(y,z,v_x,v_y,v_z,t)=0, \quad \text{for } y=-\pi,\pi.\end{equation}
By using (\ref{bc}) and (\ref{normal}), one immediately gets
\begin{eqnarray}\label{alpha}
& & \alpha_-(F)= -\int_{v_y<0} dv v_y F(-\pi,z,v_x,v_y,v_z,t),\nonumber\\
\\
& & \alpha_+(F)= \int_{v_y>0} dv v_y F(\pi,z,v_x,v_y,v_z,t).\nonumber
\end{eqnarray}

We need to introduce a low Mach number assumption. Due to the particular geometry we consider here, we do not need that the full velocity field is small. Indeed the tangential component  can be of order $1$, while we need the radial and axial components to be of the order of the Knudsen number $\e$. We will use the notation $\hat v$ to denote the couple $(v_y,v_z)$. Correspondingly,  $\hat\nabla=(\partial_y,\partial_z)$.
The Mach number assumption is therefore:
\begin{equation}\label{mach} \hat u:=\int_{\mathbb{R}^3}dv \hat v F= \mathcal{O}(\e).\end{equation}
The tangential component of the velocity, denoted by $U$ is $\mathcal{O}(1)$ with respect to the Knudsen number $\e$. However, we will also need some smallness assumption on it. Therefore, we introduce another parameter $\delta$, measuring the size of $U$:
\begin{equation}\label{nomach} U:=\int_{\mathbb{R}^3}dv v_x F= \mathcal{O}(\delta).\end{equation}
The fact that $\delta$, although small, will be chosen much larger than $\e$, is responsible for the emergence of a ghost effect. In principle $\delta$ might be completely independent of  $\e$, but, for technical reasons, we will assume, in the estimate of the remainder, a specific relation between $\delta$ and $\epsilon$.
Therefore, from now on, we replace $U$ with $\delta U$ with $U=\mathcal{O}(1)$ both in $\e$ and $\delta$.
In order to get well defined equations when $\delta\to 0$, we will also assume the constant $c$ appearing in the definition of $\sigma$ (\ref{sigma}) of order $\delta$:
\begin{equation}\label{cdeltaC} c=\delta C\end{equation}
for some other constant $C$ also of order $1$.

As usual we will look for a solution to (\ref{basic}) in terms of a truncated expansion in $\e$. The collision term forces the lowest order of the expansion to be a local Maxwellian. In order to fulfil the low Mach number assumptions (\ref{mach}), (\ref{nomach}) and the boundary conditions (\ref{bc}), the lowest order has to be of the form
\begin{equation}\label{mdelta} M_\delta= M(1+\delta r, 1+\delta \tau, (\delta U, 0,0);v)=\frac{1+\delta r}{(2\pi(1+\delta \tau))^{3/2}}\exp\left[-\frac{\bar v^2}{2(1+\delta \tau)}\right],\end{equation}
where
\begin{equation} \bar v=v-(\delta U, 0,0)= (v_x-\delta U,\hat v).\end{equation}
Note that the functions $r$ and $\tau$, representing corrections of order $\delta$ to the density and temperature, vanish at the boundary because we are restricting ourselves to the case when the two cylinders are at the same temperature. On the other hand gradients of $U$ of order $\delta$ warm up the fluid in the bulk and may produce variations of the temperature and density of order $\delta^2$.

The solution is sought for in the form
\begin{equation}\label{truexpansion}
F=M_\delta+\Phi+\e\mathcal{R},\end{equation}
where
\begin{equation}\label{expansion}
\Phi=\sum_{n=1}^N\e^n F_n\end{equation}
for a suitable choice of $N$, and $\mathcal{R}$ is a remainder.
In the next section we will give the procedure to compute the functions $F_n$'s, which will be based on a kind of Hilbert expansion for the bulk parts $B_n$ of $F_n$ and a boundary layer expansion in order to restore the boundary conditions violated by the bulk terms.
The computation of the bulk terms requires the solution of a rather complex system of equations for the hydrodynamical fields $U$, $\hat u$, $r$, $\tau$ depending on $\delta$. They are
\begin{eqnarray}\label{full}\nonumber
& &\hat \nabla[r +\tau]+\delta\hat \nabla[r \tau] =0,\\   \nonumber
& &\hat\nabla\cdot\hat u+\delta(\partial_t r+\hat\nabla\cdot(\hat u r))=0,\\ \nonumber
& &(1+\delta r) \left(\partial_tU+\hat u\cdot \hat\nabla U\right)=\eta_0\hat\Delta U+ \hat\nabla\cdot (\eta_\delta\hat \nabla U),\\
& &(1+\delta r)\big ( \partial_t\hat u+ \hat u\cdot\hat\nabla\hat u\big) +\hat\nabla \mathcal{P}_2-\frac{1}{C^2}\rho U^2e_y= \eta_0\hat\Delta \hat u\\ \nonumber
&&+\hat\nabla\cdot\left(\eta_\delta\hat\nabla \hat u+\frac{\delta^2}{\mathcal{P}}\big[\sigma_1\hat\nabla \tau\otimes\hat\nabla \tau +\sigma_2\hat\nabla U\otimes \hat\nabla U \big]\right),\\ \nonumber
& &\frac 3 2 (1+\delta r)\partial_t\tau+\frac 5 2 (1+\delta r) \hat u\cdot\hat\nabla\tau= \kappa_0\hat\Delta\tau+\hat\nabla(\kappa_\delta\hat\nabla \tau)+\delta\eta|\hat\nabla U|^2. \nonumber
\end{eqnarray}
Here $\eta$, $\kappa$, $\sigma_1$ and $\sigma_2$ are suitable transport coefficients depending on $1+\delta \tau$. We have set $\eta=\eta_0+\eta_\delta$, $\kappa=\kappa_0+\kappa_\delta$, with $\eta_0$ and $\kappa_0$ the values corresponding to $\delta=0$ and $\eta_\delta$, $\kappa_\delta$ the differences. $e_y$ is the unit vector in the direction $y$, $\mathcal{P}$ is defined in (\ref{not1}) and $\mathcal{P}_2$ is an unknown pressure related to the almost incompressibility condition given by the second of the equations (\ref{full}).

When $\delta$ goes to $0$ the equations take a rather simpler form:
\begin{eqnarray}\label{zerocou}
& &\hat\nabla\cdot\hat u=0,\nonumber\\
& &\partial_tU+\hat u\cdot \hat\nabla U=\eta_0\hat\Delta U,\\
& &\partial_t\hat u+ \hat u\cdot\hat\nabla\hat u+\hat\nabla \mathcal{P}_2-\frac{1}{C^2}\rho U^2e_y= \eta_0\hat\Delta\hat u,\nonumber
\end{eqnarray}

\begin{eqnarray}\label{zerotemp}
& &\hat \nabla[r +\tau] =0,\nonumber\\
& &\frac 3 2 \partial_t\tau+\frac 5 2 \hat u\cdot\hat\nabla\tau= \kappa_0\hat\Delta\tau.
\end{eqnarray}

\noindent The equations have to be completed with Cauchy initial data and the time independent boundary conditions
\begin{eqnarray}
\label{bcmacro}
\hat u(\pm\pi, z)=0,\quad U(\pm \pi, z)=U_\pm,\quad
\tau(\pm\pi, z)=0.\nonumber
\end{eqnarray}

The first of the equations (\ref{zerotemp}) is a Boussinesq condition ensuring the constancy of the pressure to the first order in $\delta$, while the second is just the heat equation with a convective term. Note that the equations (\ref{zerocou}) are decoupled from the (\ref{zerotemp}) and can be solved independently.

The equations (\ref{zerocou}) are the equations for the planar Couette system with an extra term in the equation for $\hat u$, representing the curvature ghost effect. Their linear analysis \cite{So1,SD} shows the presence of a bifurcation controlled by the parameters $C$ and $U_\pm$, with a stationary laminar solution losing its stability and bifurcating into two stable non laminar solutions.

This paper is devoted to the analysis of the $1$-d stationary laminar solution to (\ref{basic}). The two dimensional case where bifurcation arises, will be presented in a forthcoming paper.

In Section 2 will be given a perturbative analysis in $\delta$ of the system (\ref{HD}) in order to control the difference between its solutions and those to the equations (\ref{A1}).

Due to the presence of the centrifugal force, the boundary layer expansion, as in \cite{EML, AEMN1, AEMN2}, has to include that force. This requires the solution of a Milne problem with a force, which has been given in \cite{CEM} in the presence of a potential force. The present force is different because it is not potential and depends on velocities. Therefore the arguments in \cite{CEM} require several modifications which are presented in Section 3. Finally, in Section 4 we estimate the remainder. The key ingredient to do this is a generalized spectral inequality for a perturbed linearized Boltzmann operator, already used in \cite{AEMN2} in the context of the Benard problem.  {In the inequality given in \cite{AEMN2} a term is missing. To take it into account we change the scaling to $\delta=\gamma\e^{\frac 2 3}$ and prove a suitably modified spectral inequality, incorporating part of the asymptotic expansion. The remaining part of the asymptotic expansion is of order $\delta^2$ and easy to handle.}
The main result of this paper is summarized in the following
\begin{theorem}\label{main1}
Assume $\delta=\gamma{\e}^{\frac{2}{3}}$. Then, for $\e$ and $\gamma$ small enough,
there exists a non-negative, isolated  $L_2$-solution to the problem
\begin{equation} \label{basic1}
 v_y\frac{\partial F}{\partial y} +\frac{\e^2}{ \delta^2C^2} \sigma(y){v_x}\left(v_x\frac{\partial F}{\partial v_y}-v_y\frac{\partial F}{\partial v_x}\right)=\frac{1}{\e}Q(F,F),
\end{equation}
with diffuse reflection boundary conditions
\begin{eqnarray}\label{bc1}& &F(\mp\pi,v)=\mp\tilde M_\mp\int_{v_y<0} dv v_y F(\mp\pi,v), \quad \pm v_y>0.\end{eqnarray}
Moreover, for $q=2$ and $q=\infty$,
\begin{equation}\label{basic2}
\parallel [F-M (1,1,(\delta U,0,0)]M^{-1}\parallel_{q,2}\le c\gamma\e^{\frac{4}{3}},
\end{equation}
where $U $ is the unique solution to (\ref{A1}) and $\parallel\ \cdot\  \parallel_{q,2}$ is defined in (\ref{norm}).
\end{theorem}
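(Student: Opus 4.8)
The plan is to realize $F$ in the truncated form (\ref{truexpansion})--(\ref{mdelta}) and then to control the remainder $\mathcal R$ by an a priori estimate resting on a generalized spectral inequality. I would begin by inserting $F=M_\delta+\sum_{n=1}^N\e^nF_n+\e\mathcal R$ into (\ref{basic1}) and collecting powers of $\e$. Let $\mathcal L$ denote the collision operator linearized around $M_\delta$: it is non-negative, self-adjoint and Fredholm, with kernel spanned by the collision invariants $\sqrt{M_\delta},\,v\sqrt{M_\delta},\,|v|^2\sqrt{M_\delta}$. At each order one obtains an equation of the schematic form $\mathcal L F_n=(\text{streaming and centrifugal terms built from }F_1,\dots,F_{n-1})$, solvable only if the right-hand side is orthogonal to that kernel. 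These solvability conditions are exactly the hydrodynamic system (\ref{full}); in the present $1$-d stationary laminar regime the transverse velocity $\hat u$ and the fields $r,\tau$ are slaved to $U$, the system collapses to the single reduced equation (\ref{A1}) for $U$, and its unique solution fixes $M_\delta$. The non-hydrodynamic parts of the $F_n$ are recovered by inverting $\mathcal L$ on the orthogonal complement.

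The bulk Maxwellian $M_\delta$ and the bulk parts $B_n$ do not satisfy the diffuse reflection conditions (\ref{bc1}), so I would correct them with boundary layers. Rescaling the normal variable as $(y\pm\pi)/\e$ converts the streaming operator into a half-space transport operator and, because of the prefactor $\e^2/(\delta^2C^2)$, keeps the centrifugal term at leading order inside the layer. This produces a Milne problem \emph{with a force} that is neither potential nor velocity-independent, so the construction of \cite{CEM} must be adapted; the required existence, uniqueness and exponential decay of the layer profiles are the content of Section 3. Matching the decaying layer solutions to the bulk terms order by order makes $M_\delta+\Phi$ satisfy (\ref{bc1}) up to an $O(\e^{N})$ boundary residual and satisfy (\ref{basic1}) up to an interior residual $\mathcal A=O(\e^{N})$, small in the relevant weighted norm.

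Passing to a suitably weighted variable, $\mathcal R$ then solves an equation of the schematic form $\tfrac1\e\mathcal L\mathcal R+v_y\partial_y\mathcal R+\mathcal C\mathcal R=Q(\mathcal R,\mathcal R)+\e^{N-1}S+\tfrac1\e\,2Q(\Phi,\mathcal R)$, where $\mathcal C$ collects the centrifugal and drift contributions, the term $\tfrac1\e Q(\Phi,\mathcal R)$ is a bounded linear perturbation of $\mathcal L$, and $S$ is a controlled source. The heart of the argument is an a priori bound on $\mathcal R$ in the norm $\|\cdot\|_{q,2}$ of (\ref{norm}), for both $q=2$ and $q=\infty$. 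Testing against $\mathcal R$, the spectral gap of $\mathcal L$ controls the non-hydrodynamic part, the diffuse reflection condition supplies a favourable boundary sign, and the drift/centrifugal terms are absorbed---once the macroscopic relation (\ref{A1}) is used---by the generalized spectral inequality, which yields coercivity on the full $\mathcal R$ rather than only its non-hydrodynamic component. The delicate point, namely the term missing from the inequality of \cite{AEMN2}, is handled precisely by the scaling $\delta=\gamma\e^{2/3}$: part of the asymptotic expansion is folded into a modified spectral inequality so that the troublesome contribution is dominated, while the leftover is of order $\delta^2=\gamma^2\e^{4/3}$ and harmless. I expect \emph{this} coercivity estimate, reconciling the non-potential, velocity-dependent force with the spectral inequality at the correct power of $\e$, to be the main obstacle.

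With the a priori estimate in hand, existence of $\mathcal R$ follows from a contraction argument for the map obtained by inverting the linear part and feeding back the quadratic term, valid for $\e,\gamma$ small because $\|Q(\mathcal R,\mathcal R)\|\lesssim\|\mathcal R\|^2$ and the linear inverse is uniformly bounded through the spectral inequality; isolation of the solution comes from the same contraction. Non-negativity of $F$ is inherited from the strictly positive leading term $M_\delta$ once $\|\e\mathcal R\,M^{-1}\|_{\infty}$ is shown to be small, using the $q=\infty$ bound. Finally, writing $F-M(1,1,(\delta U,0,0))=\bigl(M_\delta-M(1,1,(\delta U,0,0))\bigr)+\Phi+\e\mathcal R$ and estimating each piece in $\|\cdot\|_{q,2}$---the first difference through the $O(\delta^2)$ corrections $r,\tau$ to density and temperature, the expansion $\Phi$ through its leading bulk and layer terms, and $\e\mathcal R$ through the a priori estimate---gives the bound $c\gamma\e^{4/3}$ of (\ref{basic2}).
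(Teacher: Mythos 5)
Your overall architecture coincides with the paper's: truncated bulk expansion whose solvability conditions produce (\ref{HD}), boundary-layer correction via a Milne problem with a non-potential, velocity-dependent force, an a priori estimate for the remainder built on a generalized spectral inequality under the scaling $\delta=\gamma\e^{2/3}$, a contraction argument for existence and isolation, and the final triangle inequality through $M_\delta$ and Corollary \ref{cor}. Most of your outline is a faithful summary of that route.

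There is, however, one genuine gap in the remainder estimate. You assert that the generalized spectral inequality ``yields coercivity on the full $\mathcal R$ rather than only its non-hydrodynamic component,'' so that a single energy (Green's identity) estimate closes the argument. That is not what the inequality gives, and no such coercivity is available: Theorem \ref{2.2} controls only $\nu^{1/2}(I-P)(I+aP)R$, i.e.\ the non-hydrodynamic part of a modified remainder. The role of passing from $L$ to $L_J=L+L(aP\cdot)$ is to absorb the $\mathcal O(\delta+\e)$ perturbation $2J(W_1,P\,\cdot\,)$ coming from the expansion into the linear operator (shifting its kernel consistently), not to make the operator coercive on $\mathrm{Kern}\,L$. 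The hydrodynamic part $PR$ lies in that kernel, receives nothing from the collision term, and must be estimated by an entirely separate mechanism: in the paper this is Step 2 (Lemma \ref{5.1}), a Fourier analysis in $y$ combined with moment-by-moment estimates (testing with $v_xv_y$, $v_yv_z$, $v_y\bar A$, $v_y$, together with boundary-flux bounds such as (\ref{pipi})--(\ref{pi})) yielding $\parallel PR\parallel_{2,2}^2\le c\e^{-2}\parallel\nu^{1/2}(I-P)(1+aP)R\parallel_{2,2}^2+\dots$. Only the interplay of the two steps --- Step 1 giving $\e^{-1}\parallel\nu^{1/2}(I-P)(1+aP)R\parallel^2\le c\e\eta\parallel PR\parallel^2+\dots$ and Step 2 giving the reverse bound with the factor $\e^{-2}$ --- closes the loop, and the powers of $\e$ match only marginally (which is also where the choice $\delta=\gamma\e^{2/3}$ enters the force and boundary terms). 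As written, your energy estimate leaves $PR$ uncontrolled, and the contraction argument cannot start. A secondary omission, less serious because you defer to Section 3, is that the Green identity for the Milne problem produces the term $G\mathfrak U\omega(v_xv_yw,w)$, which cannot be absorbed by the hard-sphere collision frequency alone; the paper must modify the cross section by the unphysical $\vartheta B^2$ term (compensated at the next order of the layer expansion) to obtain the needed sign.
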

\begin{remark} The proof of the theorem shows that the rest term is of order $\e^{\frac{5}{3}}$ in $L_\infty$.\end{remark}

\bigskip

\section{Expansions}
\setcounter{equation}{0}
\setcounter{theorem}{0}
\setcounter{proposition}{0}
\setcounter{remark}{0}
\def\theequation{2.\arabic{equation}}
\def\thetheorem{2.\arabic{theorem}}
\def\theproposition{2.\arabic{proposition}}
\def\theremark{2.\arabic{remark}}

In this section we show how to compute the contributions $F_n$ for $n=1,\dots,N$ in (\ref{expansion}). A modified Hilbert expansion is used to compute the bulk terms. Since they violate the boundary conditions, we introduce boundary layer corrections essentially supported in thin layers (of size of the order of $\e$) near the inner and outer cylinders. Therefore, $F_n$ is written as follows:
\begin{equation}\label{Fn} F_n=B_n+b_n^++b_n^-,\end{equation}
where $B_n$ is a smooth function of $y$, while $b_n^\pm$ are smooth  exponentially fast decaying functions of the rescaled variables $Y^\pm=\e^{-1}(\pi\mp y)$, so that they are exponentially small away from $\pm \pi$.
\subsection{The bulk expansion}

In order to compute the expression of the $B_n$, we substitute (\ref{truexpansion}) in (\ref{basic1}). We ignore the terms $b_n^\pm$, because they are assumed exponentially small, and equate terms with the same power of $\e$ up to the order $N$. We use the short notation
\begin{eqnarray}
&& \mathcal{L}_\delta f= 2 Q(M_\delta, f);\\
&& \mathcal{N}(f)={v_x}\left(v_x\frac{\partial f}{\partial v_y}-v_y\frac{\partial f}{\partial v_x}\right).\end{eqnarray}
Moreover, since in this subsection the parameter $\delta$ is kept fixed, we omit the index $\delta$ when there is no ambiguity.
We get the following conditions:
\begin{eqnarray}
& &\mathcal{L}\/B_1=v_y\partial_y M,\label{condition1}\\
& & \mathcal{L}\/B_2=v_y\partial_y B_1-Q(B_1,B_1),\label{condition2}\\
& & \mathcal{L}\/B_3=v_y\partial_yB_2+\frac{1}{C^2\delta^2}\mathcal{N}(M) - 2Q(B_1,B_2)\label{condition3},\\
\mathcal{L}\/B_n&=&v_y\partial_y B_{n-1}+\frac{1}{C^2\delta^2}\sum_{h=0}^{n-3}\sigma^{(h)}\mathcal{N}(B_{n-3-h})\nonumber\\&-&\sum_{h,k\ge1, h+k=n}Q(B_h,B_k), \quad n=4,\dots, N.\label{conditionn}
\end{eqnarray}
In (\ref{conditionn}) $B_0\equiv M$ and $\sigma^{(h)}$ are the coefficients of the $\e$-power series expansion of $\sigma(y)$:
\[\sigma(y)=\sum_{h=0}^\infty \e^h \sigma^{(h)}(y).\]
The appropriate functional space to solve the above equations is  the Hilbert space $\mathcal{H}$ of the real measurable functions on the velocity space $\mathbb{R}^3$, equipped with the inner product
\begin{equation}\label{innerprod} (f,g)=\int_{\mathbb{R}^3} dv M^{-1}(v) f(v)g(v).\end{equation}
The operator $\mathcal{L}$ is defined in a suitable dense submanifold $\mathcal{D}_\mathcal{L}$ of $\mathcal{H}$ and satisfies the following properties:
\begin{enumerate}
\item[L1)] $\mathcal{L}$ is symmetric and non positive: $(f,\mathcal{L}g)=(g,\mathcal{L}f)$; $(f,\mathcal{L}f)\le 0$.
\item[L2)] $\mathcal{L}$ has a $5$-dimensional null space spanned by the collision invariants:
\begin{equation}\label{null}\text{\rm Null\,} \mathcal{L}=\text{span}\{\psi_0,\dots,\psi_4\},\end{equation}
with $\psi_\beta=\chi_\beta M$, $\beta=0,\dots,4$ and
\begin{equation}\label{collinv}\chi_0=1;\quad \chi_1=v_x;\quad \chi_2=v_y;\quad\chi_3=v_z;\quad \chi_4= \frac{|v|^2}2.\end{equation}
The orthogonal projector on $\text{\rm Null\,} \mathcal{L}$ is denoted $P$, while $P^\perp=1-P$ denotes the projector on the orthogonal complement of $\text{\rm Null\,} \mathcal{L}$ in $\mathcal{H}$.
\item[L3)] The range of $\mathcal{L}$ is orthogonal to $\text{\rm Null\,} \mathcal{L}$: $(\psi_\alpha, \mathcal{L}f)=0$ for any $\alpha=0,\dots,4$ and for any $f\in \mathcal{D}_\mathcal{L}$.
\item[L4)] The following decomposition holds:
\begin{equation}\label{nu+K} \mathcal{L}f=-\nu f+\mathcal{K} f\end{equation}
where $\mathcal{K}$ is a compact operator and $\nu$ a smooth function such that
\begin{equation}\label{stimanu}\nu_0(1+|v|)\le \nu(v)\le \nu_1(1+|v|) \text{ for all } v\in \mathbb{R}^3.\end{equation}
\item[L5)] If $g\in P^\perp\mathcal{H}$ then, by the Fredholm alternative theorem and L4), there is a unique solution in $P^\perp\mathcal{H}$ to the equation
\begin{equation}\label{Lf=g} \mathcal{L}f =g,\end{equation}
which, with a slight abuse of notation, we denote by $\mathcal{L}^{-1}g$:
\begin{equation} f=\mathcal{L}^{-1} g.\end{equation}
Any solution in $\mathcal{H}$ of (\ref{Lf=g}) can be written as
\begin{equation}\label{general} f=\mathcal{L}^{-1}g +\bar f\end{equation}
with $\bar f\in \text{\rm Null\,} \mathcal{L}$.
\item[L6)] Spectral inequality: there is a constant $c>0$ such that
\begin{equation}\label{spectral} (f,\mathcal{L}f)\le -c(P^\perp f,\nu P^\perp f).\end{equation}
\end{enumerate}

By L3) in order to solve (\ref{condition1}) we need to impose the compatibility condition $P(\hat v\cdot\hat \nabla M)=0$. It is immediate to check that this is true if and only if
\begin{equation}\label{pressure0} \partial_y \mathcal{P}=0, \end{equation}
where
\begin{equation}\label{not1}
\mathcal{P}=\rho T, \quad  \rho=1+\delta r, \quad T=1+\delta\tau.\end{equation}
This is just the second of (\ref{HD}).

If this condition is satisfied, then
\begin{equation}\label{tildeAB}v_y\partial_y M= \delta(\tilde{\mathfrak{A}}\partial_y U+\tilde{\mathfrak{B}}\partial_y \tau),\end{equation}
where
\begin{eqnarray}
&&\tilde{\mathfrak{B}}= \bar v_x v_yM,\nonumber\\
\\
&&\tilde{\mathfrak{A}}= \frac{\bar{v}^2- 5 T}{2T^2}v_y M\nonumber\end{eqnarray}
are in $P^\perp\mathcal{H}$. We define ${\mathfrak{A}}$ and ${\mathfrak{B}}$ as the solutions in $P^\perp\mathcal{H}$ of the equations
\begin{equation}\label{AB}
\mathcal{L}{\mathfrak{A}}=\tilde {\mathfrak{A}};\quad \mathcal{L}{\mathfrak{B}}=\tilde {\mathfrak{B}}.\end{equation}
Therefore, by L5)
\begin{equation} {\mathfrak{A}}=\mathcal{L}^{-1}\tilde {\mathfrak{A}}, \quad {\mathfrak{B}}=\mathcal{L}^{-1}\tilde {\mathfrak{B}}.\end
{equation}
Henceforth,
\begin{equation}\label{B1}
B_1=\delta ({\mathfrak{B}}\partial_y U+ {\mathfrak{A}}\partial_y \tau)+
M\left(
\frac{\rho_1}{\rho}+
\frac{\bar v\cdot u}{T}+
\frac{\bar{v}^2 -3 T}{2 T^2}\tau_1
\right):= B_1^\perp+B_1^\parallel,
\end{equation}
with $\rho_1$, $\tau_1$ and $u$ to be determined by the other conditions.

In order to solve (\ref{condition2}), we need to impose the orthogonality of the right hand side of (\ref{condition3}) to the null space of $\mathcal{L}$.

A standard computation shows that this is equivalent to the conditions
\begin{eqnarray}&& \partial_y(\rho u_y)=0\nonumber\\
&&\rho \left(u_y\partial_y U\right)=\partial_y (\eta\partial_y U),\\
&&\partial_y(T\rho_1+ \rho\tau_1)=0,\nonumber\\
&&\frac 5 2 \rho u_y\partial_y \tau = \partial_y(\kappa\partial_y \tau)+\delta \eta(\partial_y U)^2,\nonumber
\end{eqnarray}
with
\[\eta=-\int_{\mathbb{R}^3} dv {\mathfrak{B}} \tilde {\mathfrak{B}},\quad \kappa=-\int_{\mathbb{R}^3} dv {\mathfrak{A}}\tilde {\mathfrak{A}}.\]
Then we can compute the part  $B_2^\perp$ of $B_2$ as before:
\begin{equation}B_2= -\mathcal{L} ^{-1}Q(B_1,B_1)+\mathcal{L}^{-1}P^\perp(\hat v\cdot \hat\nabla B_1)+B_2^\parallel,\end{equation}
with
\begin{equation}B_2^\parallel=M\left(
\frac{\rho_2}{\rho}+
\frac{\bar v\cdot u_2}{T}+
\frac{\bar{v}^2 -3 T}{2 T^2}\tau_2
\right),\end{equation}
and $\rho_2$, $u_2$ and $\tau_2$ to be determined.

The same procedure is applied to solve the equation for $B_3$, (\ref{condition3}). The compatibility condition is
\begin{equation}P\Big(v_y\partial_yB_2+ \frac{1}{C^2\delta^2}\mathcal{N}(M)\Big)=0.\end{equation}
This implies several conditions. We write explicitly only the following:
\begin{equation}\label{momdelta}
\rho u_y\partial_y u_y +\partial_y \mathcal{P}_2-\frac{1}{C^2}\rho U^2=\partial_y\left(\eta\partial_y u_y+\frac{\delta^2}{\mathcal{P}}\big[\sigma_1(\partial_y \tau)^2 +\sigma_2(\partial_yU)^2 \big]\right).\end{equation}
Here $\sigma_i$ are some suitable transport coefficients of higher order whose explicit expression is given for example in \cite{So1}.

Note that the term in $U^2$ in equation (\ref{momdelta}) derives from the contribution \newline $\int dv v_y \mathcal{N}(M)=\mathcal{O}(\delta^2)$. The result in (\ref{momdelta}) is independent of $\delta$ because of the scaling (\ref{cdeltaC}) and hence it persists in the limit $\delta\to 0$.

The equations written so far are just the system (\ref{HD}). They represent a system in the unknown functions $U$, $u_y$, $\tau$ and $r$, which does not include any of the extra funtions $\rho_1$, $\tau_1$ etc, which also have to satisfy some extra conditions, for example the Boussinesq condition to the first order in $\e$.

We do not give the explicit conditions which follow in a rather standard way. It is clear that the above procedure can be continued to any specific order. In this paper it will be truncated at  $N=5$. Note that at each step the solution is given up to the choice of five arbitrary functions which are fixed in the subsequent steps. In particular the last term of the truncated expansion, $B_N$, is determined up to its hydrodynamic part which is arbitrary. We will take advantage of this arbitrariness when dealing with the equation for the remainder.

\subsection{The boundary layer expansion}

We need to include in our scheme a boundary layer expansion, because the bulk terms $B_1$ do not satisfy the diffuse reflection boundary condition. For example, it is immediate to check from equation (\ref{B1}) that $B_1$ cannot be proportional to the Maxwellian $\tilde M_\pm$ at the boundaries. Therefore we introduce the corrective terms $b_1^\pm$, with a fast dependence on $y$, so that they are sensibly different from $0$ only close to the boundary.  To achieve this, $b_1^\pm$ is assumed to be a smooth function of the variable $Y^\pm=\e^{-1}(\pi\mp y)$. We define $\bar b_1^\pm$ as the solution to the following equation:
\begin{equation}\label{b1}
v_y\frac{\partial \bar b_1^\pm}{\partial Y^\pm} + \frac{\e^3}{C^2\delta^2}\tilde\sigma(\mp(\e Y^\pm-\pi))\mathcal{N}(\bar b_1^\pm)
=\mathcal{L}^\pm \bar b_1^\pm+\tilde{\mathcal{L}}^{\pm}_\vartheta  \bar b_1^\pm ,\end{equation}
with $\mathcal{L}^\pm g=2Q(M_\pm, g)$, $M_\pm=M(1,1,(\delta U_\pm, 0, 0);v)$.
Here $\tilde\sigma=\sigma\varphi$, with $\varphi$ a smooth cutoff function
\[\varphi(y)=\begin{cases} 1\quad y\in [0,\zeta],\\0\quad y>2\zeta\end{cases}
\]
for some $\zeta>0$ and with uniformly bounded derivatives. The operator $\tilde{\mathcal{L}}^{\pm}_\vartheta$ is defined in the same way as $\mathcal{L}^\pm$, but we replace the hard spheres collision cross section  $B(n,V)$ with $\vartheta B(n,V)^2$. The reason for introducing this unphysical operator is due to a technical difficulty which will be discussed in the next section. The parameter $\vartheta$ is chosen as $\vartheta= \frac{\e^3}{C^2\delta^2}$, and the contribution from $\tilde{\mathcal{L}}^{\pm}_\vartheta$, which should not be there, will be subtracted in the next order of the boundary layer expansion.
We prescribe vanishing mass flux at the boundary:
\[ m_1^\pm=\int_{\mathbb{R}^3} dv v_y \bar b_1^\pm(0,v)=0.\]
This equation has to be solved with prescribed incoming data at $Y^\pm=0$:
\[ \bar b_1^\pm(0, v)=h_1^\pm(v), \quad \text{ for } v_y>0.\]
The incoming boundary data are chosen in such a way to compensate the fact that $B_1^\perp$ is not proportional to a Maxwellian at the boundary: $h^\pm(v)=-B_1^\perp(\pm \pi)$. The solution to the Milne problem in general has a finite, but not vanishing limit at infinity, achieved exponentially fast. Let it be denoted by $\bar b^\pm_{1,\infty}$.
It belongs to the null space of $\mathcal{L}^\pm$.
The non vanishing of $\bar b^\pm_{1,\infty}$ is not good to our purposes because this contributes to the solution in the bulk.
Therefore we define $b_1^\pm=\bar b^\pm_{1}-\bar b^\pm_{1,\infty}$.
In this way we ensure the decay at infinity, but $b_1^\pm$ do not satisfy any more the equation (\ref{b1}) , because a term of the form $ \frac{\e^3}{C^2\delta^2}\tilde\sigma(\mp(\e Y^\pm-\pi))\mathcal{N}(\bar b_{1,\infty}^\pm)$ appears in the right hand side. We will compensate it with a term in the next order of the boundary layer expansion. We are not yet done, because the boundary value of $F_1^\pm$ would still be incorrect for the term $-\bar b^\pm_{1,\infty}$ which is not Maxwellian. However, we have not yet fixed the boundary values of $B_1^\parallel$ and we can use them  to compensate it, since it is in the null space of $\mathcal{L}^\pm$.
Finally, note that, on each boundary, the boundary value correction due to the other boundary is not zero, but exponentially small in $\e^{-1}$. This will be compensated in the remainder.
In conclusion
$f_1=B_1+b_1^++b_1^-$ satisfies the diffuse reflection boundary conditions up to terms $\Psi_1^\pm$ exponentially small in $\e^{-1}$.

The equation (\ref{b1})  is a special case of the Milne problem we discuss in the next section. We note however that in the standard Milne problem  the second term in the left hand side of (\ref{b1}) is absent. When in the equation there is a force term, as in the present case and in \cite{EML, AEMN1, AEMN2}, although very small, the lack of regularity in the velocity of the solution to the Milne problem for $v_y=0$ (the derivative $\partial_{v_y} b_1^\pm$ does not exist for $v_y=0$ at the boundary),  does not allow us to include them in higher order terms of the expansion. Indeed we need to keep it in (\ref{b1}) which will be solved in a suitably weak sense, because we cannot afford to have any $v_y$ derivative of $b_1^\pm$ present in the expansion. This problem was already present in the case of the Benard problem where the force derives from a potential and the solution is given in \cite{CEM}.

The corrections to $B_n$, for $n>1$  will be $b_n^\pm$ solving a similar equation:
\begin{equation}\label{bn}
v_y\frac{\partial \bar b_n^\pm}{\partial Y^\pm} + \frac{\e^3}{C^2\delta^2}\bar\sigma(\mp(\e Y^\pm-\pi))\mathcal{N}(\bar b_n^\pm)
=\mathcal{L}^\pm \bar b_n^\pm+\tilde{\mathcal{L}}^{\pm}_\vartheta  \bar b_n^\pm+ S_n^\pm,\end{equation}
with prescribed incoming data at $Y^\pm=0$:
\[ \bar b_n^\pm(0, v)=h_n^\pm(v), \quad \text{ for } v_y>0,\]
vanishing mass flux at the boundary:
\[ m_n^\pm=\int_{\mathbb{R}^3} dv v_y \bar b_n^\pm(0,v)=0\]
and  then define $b_n^\pm= \bar b_n^\pm - \bar b_{n,\infty}^\pm$.
The incoming boundary data are chosen in such a way that $f_n=B_n+b_n^++b_n^-$ satisfies the diffuse reflection boundary conditions up to terms exponentially small in $\e^{-1}$, $\Psi_n^\pm$.
The source term $S_n^\pm$ has the following form, for $n>1$:
\begin{eqnarray}\label{source}
&&S_n^\pm=-\tilde{\mathcal{L}}^{\pm}_\vartheta  \bar b_{n-1}^\pm+ \sum_{h,k\ge 1, h+k=n}\Big[2Q(B_h, b_k^\pm) +Q(b_h^\pm, b_k^\pm)+Q(b_h^\pm,b_k^\mp)
\nonumber\\
&&- \frac{\e^2}{C^2\delta^2}\tilde\sigma^c(\mp(\e Y^\pm-\pi))\mathcal{N}(b_{n-1}^\pm)-
\frac{\e^2}{C^2\delta^2}\tilde\sigma(\mp(\e Y^\pm-\pi))\mathcal{N}(\bar b_{n-1,\infty}^\pm)\Big]\nonumber
\end{eqnarray}
with $\tilde\sigma^c=\sigma(1-\varphi)$ and $b_0^\pm=0$ and the property $\int dv S_n^\pm(y,v) =0$.
The existence of the solutions to (\ref{b1}), (\ref{bn}) with the prescribed conditions follows from  Theorem \ref{milneth}, given in next section, via a procedure which is the same used in \cite{ELM2, ELM3, AEMN1, AEMN2}. We do not repeat it here and refer to those papers for details.

\subsection{Hydrodynamical expansion}

  In this section we compare the solution of  the stationary $1$-d equations for  $\delta>0$ with the solution of the limiting equations for $\delta=0$. The former are
\begin{eqnarray} \label{HD}
&&\partial_y(\rho u_y) =0,\nonumber\\
&&\partial_y[(1+\delta r)(1+\delta\tau)]=0,\nonumber\\
&&\rho  u_y \partial_y U=\partial_y (\eta\partial_y U),\nonumber\\
&&\frac 5 2 \rho u_y \partial_y\tau = \partial_y(\kappa\partial_y \tau)+\delta(\eta\partial_y U)^2,\\
 &&\rho u_y\partial_y u_y+\partial_y \mathcal{P}_2-\frac{1}{C^2}\rho U^2= \eta_0\partial^2_y u_y\nonumber\\&& +\partial_y\left(\eta_\delta\partial_y u_y
+\frac{\delta^2}{\mathcal{P}}\big[\sigma_1(\partial_y \tau)^2 +\sigma_2(\partial_y U)^2 \big]\right),\nonumber
\end{eqnarray}
with $\rho=1+\delta r$ and
 the boundary conditions
$$u_y(\pm \pi)=0; \quad U_x(\pm\pi)=U_\pm,\quad \tau(\pm \pi)=r(\pm \pi)=0.$$
In the limit of vanishing $\delta$ the velocity $u_y$ and $\tau$ are identically zero and \begin{eqnarray} \label{A1}
&&\eta_0\partial_y^2 U=0,\\
 && \partial_y \mathcal{P}_2-\frac{1}{C^2}\rho U^2= 0 \nonumber\end{eqnarray}
whose solution is the laminar field $\bar U=U_-+\beta(y+\pi)$, with $\beta=(2\pi)^{-1}(U_+-U_-)$.

The first equation in (\ref{HD}) implies, by using the boundary conditions for $u_y$, $\rho u_y=0$. Since $\rho=1+\delta r$ for $\delta$ small is strictly larger than zero, it has to be $u_y=0$. The equations reduce to
\begin{eqnarray}
&&\partial_y (\eta\partial_y U)=0,\nonumber\\
&&\partial_y(\kappa\partial_y \tau)+\delta(\eta\partial_y U)^2=0,\\
&&\partial_y \mathcal{P}_2-\frac{1}{C^2}\rho U^2=\frac{\delta}{\mathcal{P}}\partial_y(\sigma_1(\partial_y \tau)^2).\nonumber
\end{eqnarray}
The first two equations decouple from the third and can be solved to find $U$ and $\tau$. Then the last one gives $ \mathcal{P}_2$.

We define  $\tilde U=U-\bar U$. We have
\begin{eqnarray}\label{AA}
&&\partial_y (\eta\partial_y \tilde U)+ \beta\partial_y \eta=0,\nonumber\\
&&\partial_y(\kappa\partial_y \tau)+\delta\eta(\beta +\partial_y \tilde U)^2=0,\\
&& \tilde U(\pm\pi)=0;\quad \tau(\pm\pi)=0.\nonumber
\end{eqnarray}

The functions $\eta$ and $\kappa$ are smooth functions of the temperature. The solutions are constructed by an iterative procedure. We therefore assume that $\|\tau\|_\infty<1$ . In consequence $\|\eta\|_\infty$ and $\|\kappa\|_\infty$ are uniformly bounded for $\delta<1$. Moreover, for $\delta$ sufficiently small,  $\|\eta\|_\infty>\frac{\eta_0}2$ and $\|\kappa\|_\infty>\frac{\kappa_0}2$
with $\eta_0$ and $\kappa_0$ the values of $\eta$ and $\kappa$ for $\tau=0$.  Finally, note that $\partial_y \eta=\delta\eta'\partial_y\tau$ with $\eta'=\frac{\partial \eta}{\partial T}$  uniformly bounded.
By multiplying the first of (\ref{AA}) by $\tilde U$ and the second by $\tau$ and integrating in $y$, after an integration by parts we get:

\begin{eqnarray*}\label{hydrest}
&&\frac {\eta_0}2\|\partial_y \tilde  U\|^2\le  c\delta\|\tilde U\|\|\partial_y \tau\|\le c\delta\|\partial\tilde U\|\|\partial_y \tau\|,\\
&&\frac {\kappa_0}2\|\partial_y \tau\|^2\le c\delta\|\tau\|_\infty(\beta^2+\|\partial \tilde U\|^2)
\le c\delta\|\partial\tau\|(\beta^2+\|\partial \tilde U\|^2).
\end{eqnarray*}

Therefore
\begin{eqnarray*}\label{hydrest2}
&&\|\partial_y \tilde  U\|\le c\delta\|\partial_y \tau\|,\\
&&\|\partial_y \tau\|\le c\delta(\beta^2+\|\partial_y \tilde U\|^2).
\end{eqnarray*}
The above inequalities imply that $\|\partial_y\tilde U\|$ and $\|\partial_y\tau\|$ are $\mathcal{O}(\delta)$. In particular $\|\partial_y\tau\|_\infty<1$ for $\delta$ sufficiently small. We omit the proof of the convergence of the approximating sequence, which follows along the same lines.

We conclude with the following
\begin{theorem} If $\delta$ is sufficiently small, the equations (\ref{HD}) have a unique  $C^\infty(-\pi,\pi)$ stationary solution which differs from the laminar solution (namely  $U=\bar U(y)$, $\tau=0$) for $\mathcal{O}(\delta)$.
\end{theorem}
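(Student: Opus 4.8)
The plan is to reduce the full system (\ref{HD}) to a single decoupled nonlinear two-point boundary value problem, solve the latter by a contraction argument anchored on the a priori estimates already displayed, and finally recover the remaining fields and the regularity by quadrature and bootstrap. First I would exploit the two conservation-type equations. The first equation $\partial_y(\rho u_y)=0$ together with $u_y(\pm\pi)=0$ forces $\rho u_y\equiv 0$; since $\rho=1+\delta r>0$ for $\delta$ small, this gives $u_y\equiv 0$ and kills every convective term. The second equation $\partial_y[(1+\delta r)(1+\delta\tau)]=0$ with $r(\pm\pi)=\tau(\pm\pi)=0$ yields $\mathcal P=(1+\delta r)(1+\delta\tau)\equiv 1$, which determines $r$ explicitly from $\tau$ via $1+\delta r=(1+\delta\tau)^{-1}$. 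What remains is exactly the pair (\ref{AA}) for $(\tilde U,\tau)$, decoupled from the pressure.

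Then I would solve (\ref{AA}) by iteration. Given $\tau^{(k)}$ with $\|\tau^{(k)}\|_\infty<1$, set $\eta^{(k)}=\eta(1+\delta\tau^{(k)})$ and $\kappa^{(k)}=\kappa(1+\delta\tau^{(k)})$; for $\delta$ small these are uniformly elliptic ($\eta^{(k)}>\eta_0/2$, $\kappa^{(k)}>\kappa_0/2$) and bounded. I solve the two linear problems $\partial_y(\eta^{(k)}\partial_y\tilde U^{(k+1)})=-\beta\partial_y\eta^{(k)}$ and $\partial_y(\kappa^{(k)}\partial_y\tau^{(k+1)})=-\delta\eta^{(k)}(\beta+\partial_y\tilde U^{(k+1)})^2$ with zero boundary data. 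Each is a scalar ODE integrable by quadrature: e.g. $\partial_y\tilde U^{(k+1)}=-\beta+c_1/\eta^{(k)}$ with $c_1$ fixed by $\tilde U^{(k+1)}(\pm\pi)=0$. The energy estimates already derived, closed self-consistently (substituting $\|\partial_y\tilde U\|\le c\delta\|\partial_y\tau\|$ into $\|\partial_y\tau\|\le c\delta(\beta^2+\|\partial_y\tilde U\|^2)$ bootstraps to $\|\partial_y\tau\|=\mathcal O(\delta)$ and even $\|\partial_y\tilde U\|=\mathcal O(\delta^2)$), show that the map preserves the ball $\{\|\partial_y\tau\|\le c\delta,\ \|\partial_y\tilde U\|\le c\delta\}$, and in particular $\|\tau\|_\infty<1$ for $\delta$ small, so the iteration is well defined.

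The main obstacle is the contraction estimate. I would subtract the equations for consecutive iterates and estimate $\tilde U^{(k+1)}-\tilde U^{(k)}$ and $\tau^{(k+1)}-\tau^{(k)}$ in the energy norm. The crucial point is that, because the transport coefficients depend on temperature through $T=1+\delta\tau$, the differences $\eta^{(k)}-\eta^{(k-1)}$ and $\kappa^{(k)}-\kappa^{(k-1)}$ are controlled by $\delta\|\tau^{(k)}-\tau^{(k-1)}\|_\infty$, hence carry an explicit factor $\delta$; likewise the quadratic source differs by $\mathcal O(\delta)\|\partial_y(\tilde U^{(k+1)}-\tilde U^{(k)})\|$ plus lower order terms. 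Repeating the same integrations by parts used for the a priori bounds then yields a contraction factor $\mathcal O(\delta)<1$, giving a unique fixed point in the small ball, and with it uniqueness among all solutions deviating from the laminar field by $\mathcal O(\delta)$.

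Once $(\tilde U,\tau)$ is found, smoothness follows by bootstrap: $\partial_y\tilde U=-\beta+c_1/\eta(1+\delta\tau)$ is as regular as $\tau$, and feeding this into the $\tau$-equation (whose source is then one degree smoother) gains one derivative at each pass; since $\eta,\kappa,\sigma_i$ are $C^\infty$ functions of $T$, one reaches $\tau,U\in C^\infty(-\pi,\pi)$. Finally the third reduced equation $\partial_y\mathcal P_2-C^{-2}\rho U^2=\delta\mathcal P^{-1}\partial_y(\sigma_1(\partial_y\tau)^2)$ determines $\partial_y\mathcal P_2$, hence $\mathcal P_2$ by one further integration (up to the usual additive constant, fixed by normalization). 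The bounds $\tilde U=\mathcal O(\delta)$ and $\tau=\mathcal O(\delta)$, together with $u_y\equiv 0$ and the relation $1+\delta r=(1+\delta\tau)^{-1}$, give the claimed $\mathcal O(\delta)$ deviation from the laminar solution $U=\bar U$, $\tau=0$.
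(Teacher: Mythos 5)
Your proposal is correct and follows essentially the same route as the paper: reduction via the two conservation equations to the decoupled system (\ref{AA}), an iteration on the temperature-dependent transport coefficients, and the same energy estimates yielding $\|\partial_y\tilde U\|,\|\partial_y\tau\|=\mathcal{O}(\delta)$. You in fact supply the contraction and bootstrap details that the paper explicitly omits (``we omit the proof of the convergence of the approximating sequence, which follows along the same lines''), and your sharper observation $\|\partial_y\tilde U\|=\mathcal{O}(\delta^2)$ is consistent with the displayed inequalities.
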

Call $(\rho^\delta,  U^\delta,u_y^\delta, \tau^\delta)$ the solution of (\ref{HD}) (remember $u_y^\delta=0$) and $(1, U, 0,1)$ the   solution of the  equations for $\delta=0$ with $U$ the unique solution of (\ref{A1}).\newline  Let $M_\delta=M(\rho^\delta, (\delta U^\delta,0), \tau^\delta)$ and  $M_\delta^0=M(1, ( \delta U, 0),1)$. Then
\begin{corollary}\label{cor}
For $q=2,\infty$
$$
\parallel M^{-1}[M_\delta-M_\delta^0]\parallel_{q,2}\le c \delta^2,\quad  \parallel M^{-1}[M_\delta^0-M]\parallel_{q,2}\le c\delta.$$\end{corollary}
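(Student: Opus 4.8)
The plan is to deduce both bounds from the size of the hydrodynamic deviations furnished by the preceding Theorem, combined with the smooth dependence of a Maxwellian on its parameters. Recall $M(\rho,T,u;v)$ from (\ref{maxwellian}); here $M_\delta=M(\rho^\delta,T^\delta,(\delta U^\delta,0,0))$ with $\rho^\delta=1+\delta r^\delta$ and $T^\delta=1+\delta\tau^\delta$, while $M_\delta^0=M(1,1,(\delta\bar U,0,0))$ is built from the laminar field $\bar U$ solving (\ref{A1}), and the weight is $M=M(1,1,0)$.

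First I would record the sizes of the parameter differences. By the Theorem the solution of (\ref{HD}) differs from the laminar one by $\mathcal{O}(\delta)$, so $\|\tau^\delta\|_\infty$ and $\|U^\delta-\bar U\|_\infty$ are $\mathcal{O}(\delta)$; integrating the Boussinesq relation (the second equation of (\ref{HD})) with $r^\delta(\pm\pi)=\tau^\delta(\pm\pi)=0$ gives $(1+\delta r^\delta)(1+\delta\tau^\delta)=1$, whence $r^\delta=-\tau^\delta/(1+\delta\tau^\delta)$ and $\|r^\delta\|_\infty=\mathcal{O}(\delta)$. Since each parameter of $M_\delta$ carries a prefactor $\delta$, the three parameters of $M_\delta$ and $M_\delta^0$ differ by
$$\rho^\delta-1=\delta r^\delta=\mathcal{O}(\delta^2),\qquad T^\delta-1=\delta\tau^\delta=\mathcal{O}(\delta^2),\qquad \delta U^\delta-\delta\bar U=\mathcal{O}(\delta^2),$$
all uniformly in $y$. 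By contrast $M_\delta^0$ and $M$ share density and temperature and differ only in the $v_x$--velocity, by $\delta\bar U=\mathcal{O}(\delta)$.

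Next I would turn parameter differences into norm bounds by a mean--value argument. Interpolating the parameters linearly along $s\in[0,1]$ and writing, for the first difference, $g:=M^{-1}[M_\delta-M_\delta^0]=\int_0^1 M^{-1}\big(\partial_\rho M_s\,\dot\rho+\partial_T M_s\,\dot T+\partial_u M_s\cdot\dot u\big)\,ds$, each factor $\partial_\rho M_s,\partial_T M_s,\partial_u M_s$ equals $M_s$ times a polynomial in $v$ of degree $\le 2$, and $\dot\rho,\dot T,\dot u$ are exactly the differences above, each $\mathcal{O}(\delta^2)$ uniformly in $y$. Pulling these scalars out leaves $|g|\le c\delta^2\,(1+|v|^2)\sup_{s}M^{-1}M_s$. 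Inserting this into $\|\cdot\|_{q,2}$ (defined in (\ref{norm})) and using that its velocity part weights the square by $M$ up to polynomial factors, the velocity integral reduces to that of $M^{-1}M_s^2(1+|v|)^k$, which is finite; the $y$--integration is controlled by the uniform $L^\infty_y$ bounds on the fields, yielding both $q=2$ and $q=\infty$ and hence $\|M^{-1}[M_\delta-M_\delta^0]\|_{q,2}\le c\delta^2$. The second bound is the identical computation along the one--parameter path from $M$ to $M_\delta^0$, where only $\partial_u M=\frac{v-u}{T}M$ enters and the single difference $\delta\bar U=\mathcal{O}(\delta)$ factors out, giving $\|M^{-1}[M_\delta^0-M]\|_{q,2}\le c\delta$.

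The one step needing care --- and the only genuine obstacle --- is the uniform control of the Gaussian weight along the interpolations: one must verify that the intermediate temperatures stay bounded away from $0$ and, crucially, below the threshold $2$, so that $M^{-1}M_s^2\sim \exp\!\big[(\tfrac12-\tfrac1{T_s})|v|^2+\cdots\big]$ still decays and the velocity moments converge, and that the intermediate velocities remain $\mathcal{O}(\delta)$. Both hold for $\delta$ small precisely because the field deviations are $\mathcal{O}(\delta)$, so $T_s=1+\mathcal{O}(\delta)$ and $|u_s|=\mathcal{O}(\delta)$ throughout and no large excursion of the parameters can occur. Everything else is the routine estimate of finitely many Gaussian moments.
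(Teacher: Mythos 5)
Your proof is correct and is exactly the argument the paper intends: the corollary is stated without proof as an immediate consequence of the preceding theorem, and your Taylor/mean-value expansion of the Maxwellian in its parameters (with the Boussinesq relation giving $r^\delta=\mathcal{O}(\delta)$, hence $\mathcal{O}(\delta^2)$ parameter differences between $M_\delta$ and $M_\delta^0$, and the check that intermediate temperatures stay below $2$ so that $M^{-1}M_s^2$ has finite moments) supplies precisely the details the authors omit.
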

\bigskip

\subsection{Estimates for the expansion}
The properties of $F_n$'s constructed in this section are summarized in the following theorem:
\begin{theorem}
\label{fj0}
The functions $F_n$, $n=1,\dots,5$ and $\Psi_{n,\e}$ can be determined so as to satisfy the boundary  conditions
\begin{eqnarray*}
&&F_n(\mp\pi,v)={\tilde{M}_\mp(v)}\int_{w_y \lessgtr 0} |w_y|[F _n(\mp\pi,w)-\Psi_{n,\e}(\mp\pi,w)]dw\\&&
\hskip 2cm+\Psi_{n,\e}(\mp\pi,v),
 \hspace*{0.05in}v_z\gtrless0,\nonumber
\end{eqnarray*}
and the normalization condition
 $
\int_{\mathbb{R}^3\times[-\pi,\pi]}dv\/dy
F_n=0,$
so that the asymptotic expansion in $\e$ for the stationary problem (\ref{basic1}), truncated to the order $5$ is given by
$$\Phi=\sum_{n=1}^5\e^n{ F_n}(y,v).$$
The functions $F_n$'s satisfy the conditions  {(here $\zeta_j=(1+|v|)^j$)}
$$\parallel  {\zeta_j} M^{-1}F_n\parallel _{2,2}<\infty,
\quad \parallel  {\zeta_j} M^{-1} F_n\parallel _{\infty,2}<\infty\ ,
\quad n=1,\dots,5,$$
 {for any $j$.}
\noindent The functions $\Psi_{n,\e}$ are such that
$\|\Psi_{n,\e}\|_{q,2,\sim}$, $q=2,\infty$, are exponentially small as $\e\to 0$ and $\int_{\R^3}dv v_y  \Psi_{n,\e}=0$.
\end{theorem}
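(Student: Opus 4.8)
The plan is to assemble the three constructions of Sections 2.1--2.3, together with the Milne theory of Section 3, and read off from them the four assertions: boundary matching up to an exponentially small defect $\Psi_{n,\e}$, the normalization $\int F_n=0$, the weighted bounds on $F_n$, and the smallness and zero flux of $\Psi_{n,\e}$. Throughout I use the decomposition $F_n=B_n+b_n^++b_n^-$ from (\ref{Fn}). I begin with the weighted bounds on the bulk parts. By (\ref{condition1})--(\ref{conditionn}) each $B_n$ is built from $M$, its $y$-derivatives, and $\mathcal{L}^{-1}$ applied to source terms that are finite combinations of $Q(B_h,B_k)$, $v_y\partial_y B_{n-1}$ and $\mathcal{N}(B_{n-3-h})$. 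Since the hydrodynamic fields $U,\tau,r$ are $C^\infty$ by the theorem of Section 2.3, every such source is a polynomial in $v$ times $M$, hence lies in $P^\perp\mathcal{H}$ with all polynomial-weighted $M^{-1}$-norms finite. By L4) the inverse $\mathcal{L}^{-1}=(-\nu+\mathcal{K})^{-1}$ on $P^\perp\mathcal{H}$ preserves such weights, the gain $\nu^{-1}\sim(1+|v|)^{-1}$ dominating the loss through the compact $\mathcal{K}$; thus $\parallel \zeta_j M^{-1}B_n\parallel_{2,2}$ and $\parallel \zeta_j M^{-1}B_n\parallel_{\infty,2}$ are finite for every $j$.

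For the boundary-layer parts, the functions $\bar b_n^\pm$ solve the Milne problems (\ref{b1}), (\ref{bn}); their solvability, the existence of the exponentially fast attained limit $\bar b_{n,\infty}^\pm\in\mathrm{Null}\,\mathcal{L}^\pm$, and the control of polynomial velocity weights are precisely the content of Theorem \ref{milneth}. Subtracting the limit, $b_n^\pm=\bar b_n^\pm-\bar b_{n,\infty}^\pm$ decays like $\mathrm{e}^{-cY^\pm}$ in the rescaled variable; rewriting $Y^\pm=\e^{-1}(\pi\mp y)$ and integrating in $y$ turns this exponential decay into finiteness of the $\parallel\cdot\parallel_{2,2}$ and $\parallel\cdot\parallel_{\infty,2}$ norms, so that $\parallel \zeta_j M^{-1}b_n^\pm\parallel$ is finite for every $j$. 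Combined with the bulk estimate this yields the stated bounds on $F_n=B_n+b_n^++b_n^-$.

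Next comes the boundary matching. The incoming data $h_n^\pm$ are chosen so that at its own boundary the sum $B_n^\perp+b_n^\pm$ is proportional to $\tilde M_\pm$, while the non-decaying Maxwellian piece $\bar b_{n,\infty}^\pm$ is absorbed into the still-free hydrodynamic part $B_n^\parallel$ at that boundary, which is legitimate since $\bar b_{n,\infty}^\pm\in\mathrm{Null}\,\mathcal{L}^\pm$. After this matching the only residual violation of the diffuse-reflection condition at $y=\pm\pi$ is the tail of the opposite boundary layer $b_n^\mp$, evaluated where $Y^\mp$ is of order $\e^{-1}$; we define $\Psi_{n,\e}$ to be exactly this residual. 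By the exponential decay from Theorem \ref{milneth}, $\parallel\Psi_{n,\e}\parallel_{q,2,\sim}$ is exponentially small in $\e^{-1}$, and the vanishing mass-flux conditions $m_n^\pm=0$ imposed on each $b_n^\pm$, together with the zero-flux property of the bulk terms, give $\int_{\R^3}dv\, v_y\Psi_{n,\e}=0$. Finally, at each order the bulk solution is determined only up to its five hydrodynamic moments (cf. the remark after (\ref{momdelta})); using the scalar freedom in the density moment of $B_n$ one enforces $\int F_n\,dv\,dy=0$ without disturbing the previous steps.

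The main obstacle is not this bookkeeping but the boundary-layer step, namely guaranteeing that the non-standard Milne problems (\ref{b1}), (\ref{bn}) -- in which the velocity-dependent, non-potential centrifugal term $\mathcal{N}$ must be kept inside the equation and solved in a weak sense, because $\partial_{v_y}b_n^\pm$ fails to exist at $v_y=0$ -- actually produce solutions with the exponential decay and weighted velocity bounds invoked above. This is exactly what Theorem \ref{milneth} is built to supply, so in the present proof it enters as a black box; the genuine difficulty is deferred to its proof in Section 3.
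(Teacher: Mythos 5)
Your proposal is correct and follows essentially the same route as the paper, which simply states that the result ``follows as in \cite{AEMN1, AEMN2}, using Theorem \ref{milneth}'': you assemble the bulk and boundary-layer constructions of Section 2, invoke Theorem \ref{milneth} as a black box for the non-standard Milne problems, and correctly identify that the genuine difficulty lives in the proof of that theorem. Your reconstruction of the boundary matching, the origin of $\Psi_{n,\e}$ as the opposite-wall tail, and the use of the residual hydrodynamic freedom for the normalization all agree with the paper's intended argument.
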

\begin{proof}It follows as in \cite{AEMN1, AEMN2}, using the Theorem \ref{milneth} in next section. \end{proof}

\section{Milne problem}
\setcounter{equation}{0}
\setcounter{theorem}{0}
\setcounter{proposition}{0}
\setcounter{remark}{0}
\def\theequation{3.\arabic{equation}}
\def\thetheorem{3.\arabic{theorem}}
\def\theproposition{3.\arabic{proposition}}
\def\theremark{3.\arabic{remark}}

In this section we deal with the Milne problem
\begin{eqnarray}\label{Milne}
&&v_y\frac{\partial g}{\partial Y}+G\omega(Y){N}g={L}g+S,\nonumber\\
&&g(0,v)=h(v),\quad v_y>0,\\
&&\int dvv_y M(v)g(0,v)=0,\nonumber\\
&&\int dv S(y,v) M=0\nonumber
\end{eqnarray}
for $S$ and $h$ prescribed.
Here $M$ is the Maxwellian with $T=1$, $\rho=1$ and $u=(\mathfrak{U},0,0)$;  { in this section we adopt the notation}
\[{L}f=2M^{-1}[\mathcal{L}(Mf)+\mathcal{L}_\vartheta(Mf)], \quad Nf=M^{-1}\mathcal{N}(Mf).\]  and $\omega(Y)$ a compactly supported smooth function.
Moreover, we assume that there is $c>0$ such that
\begin{equation}\label{123}\int_{v_y>0} dvv_y M(v)h^2(v)<c.\end{equation}
Note the explicit expression of $Nf$:
\begin{equation}\label{Nf}Nf= v_x^2\frac{\partial f}{\partial v_y}-v_xv_y\frac{\partial f}{\partial v_x}-\mathfrak{U}v_xv_y f.\end{equation}
The results will be applied with $G=\frac{\e^3}{\delta^2C^2 }$, $\vartheta>G\mathfrak{U}$, $\omega=\tilde\sigma$ and $\mathfrak{U}=\delta U_\pm$.
The procedure is the same used in \cite{CEM}: we construct the solution in a slab of size $\ell$ with reflecting boundary condition $g(\ell, Rv)=g(\ell,v)$, $Rv=(v_x,-v_y, v_z)$  and obtain estimates uniform in $\ell$, then we take the limit $\ell\to \infty$. As in \cite{CEM}, the main point is to discuss the case $S=0$.  The assumptions on $S$ will be given in Theorem \ref{milneth} below. We only point out the differences with \cite{CEM}.
We write $g=q+w$ with $q\in \text{Null} \ L$ and $(q,w)=0$. We set $\bar\chi_0=1$, $\bar\chi_1=v_x-\mathfrak{U}$, $\bar\chi_2=v_y$, $\bar\chi_3=v_z$, $\bar\chi_4=\frac 1 2[(v_x-\mathfrak{U})^2+v_y^2+v_z^2]$,
and
$q=\sum_{\alpha=0}^4 b_\alpha(Y) \bar\chi_\alpha$. Moreover, $(\,\cdot\, ,\,\cdot\,)$ denotes the inner product on $L_2(\mathbb{R}^3, Mdv)$.

Note that $b_2=0$. Indeed, by multiplying (\ref{Milne}) by $M$ and integrating in $dv$, we get
\[\frac{\partial}{\partial Y}(v_y,g)+ G\omega (v_y,g)=0,\]
because
\[(1,Ng)=\int dv gMv_y.\]
Therefore, with $\Omega(Y)=\int_0^Y dY'\omega(Y')$, we have
\[b_2(Y)=(v_y,g)(Y)=\exp\{- G[\Omega(\ell)-\Omega(Y)]\}(v_y,g)(\ell)=0\]
because $(v_y,g)(\ell)$ vanishes by the reflecting boundary conditions at $Y=\ell$.

As a consequence, $q=\sum_{\alpha\neq 2} b_\alpha(Y) \bar\chi_\alpha$.
Moreover, $(v_y\bar\chi_\alpha,\bar\chi_\beta)=0$ for $\alpha,\beta\neq 2$. Therefore $(v_yq,q)=0$.

Set $I_\alpha=(v_y\bar\chi_\alpha, g)=(v_y\bar\chi_\alpha, w)$ for $\alpha\neq 2$. The functions $I_\alpha$ satisfy the following equations:
\[\frac{\partial}{\partial Y}I_\alpha=\omega\sum_{\beta\neq 2} \mathcal{S}_{\alpha,\beta} I_\beta,\]
with
\[\mathcal{S}_{\alpha,\beta}=\delta_{\alpha,\beta} + (\delta_{\alpha,1}-\mathfrak{U}\delta_{\alpha,4})\delta_{1,\beta},\]
for $\alpha,\beta\neq 2$.
Indeed,
\[(\bar\chi_\alpha, N\bar\chi_\beta)=\int dv M\bar\chi_\alpha\Big(\mathfrak{U}v_xv_y(\delta_{\beta,4}-1)-v_xv_y\delta_{\beta,1}\Big),\]
which is odd in $v_y$ for $\alpha\neq 2$. On the other hand,
\[(\bar\chi_\alpha,Nw)= \int dv \Big[v_yM\bar\chi_\alpha w+ Mv_yv_xw(\delta_{\alpha,1}-\mathfrak{U}\delta_{\alpha,4})\Big].\]
Therefore, with $I=(I_\alpha)_{\alpha\neq 2}$, we have
\[I(Y)= I(\ell)\exp\{(\Omega(\ell)-\Omega(Y))\mathcal{S}\},\]
which implies
\[I(Y)=0\]
by the reflection boundary condition at $Y=\ell$.

We take the inner product of the first equation in (\ref{Milne}) with $g$.
By using (\ref{Nf}),
\begin{equation}\label{id}
(g,{N}(g))=
\frac 1 2 (v_yg,g)-\frac 1 2 \mathfrak{U}(v_ xv_y g,g),
\nonumber\end{equation}
and so we obtain:
\begin{multline}\label{green}\frac{1}{2}\frac{ \partial}{\partial Y}(v_y g, g)+\frac 1 2 G\omega(v_y g, g)=
(w,Lw)+\frac{1}{2}G\mathfrak{U}\omega(v_xv_y g, g)\\=(w,Lw)+\frac{1}{2}G\mathfrak{U}\omega(v_xv_y w, w)+G\mathfrak{U}\omega(v_xv_y w, q),\end{multline}
because $(v_xv_y q,q)=0$.
The last but one term in the above Green identity is handled by adding in the cross section in the linearized Boltzmann operator  the unphysical term $\vartheta B(\,\cdot\, ,\,\cdot\,)^2$. Indeed, in this case the inequality (\ref{spectral}) holds with $\nu$ replaced by $\nu +\vartheta\tilde\nu$, with $\nu$ satisfying the inequalities (\ref{stimanu}) and $\tilde\nu$ such that
\begin{equation}\label{stimanutilde}\tilde\nu_0(1+|v|)^2\le \tilde\nu(v)\le \tilde\nu_1(1+|v|)^2,\quad v\in \mathbb{R}^3\end{equation}
for suitable constants $\tilde\nu_0$ and $\tilde\nu_1$. This allows to control the term $\frac{G\mathfrak{U}\o}2(v_xv_y w, w)$ as long as we have $\vartheta>\frac{G\mathfrak{U}}2$. \newline
So we have only to worry about the term ${\omega G\mathfrak{U}}(v_xv_yq,w)$ for which we use the bound
\[{\omega G\mathfrak{U}}|(v_xv_yq,w)|\le\frac 1 2 {\omega G\mathfrak{U}}\|w\|^2 + \frac c 2 {\omega G\mathfrak{U}}\|q\|^2.\]

We set $\mathcal{A}= (v_y g, g)$. Note that   $\mathcal{A}(0)<c$ by (\ref{123}). We need upper and lower bounds on $\mathcal{A}(0)$.
We can write
\begin{multline}\label{AL}\mathcal{A}(0)=\mathcal{A}(\ell)\exp\{\frac{G}{2}(\Omega(\ell)\}+\int_0^\ell dY \exp\{\frac{G}{2}(\Omega(\ell)-\Omega(Y)\}\\\Big[-(w,(L+\frac 1 2 \mathfrak{U} G\omega v_xv_y)w)(Y)-G\mathfrak{U}\omega(Y)(v_xv_y q,w)\Big].\end{multline}
By the reflecting boundary conditions, $\mathcal{A}(\ell)=0$. Moreover, $(w,(L+\frac 1 2 G \mathfrak{U}\omega v_xv_y)w)\ge-(w,\nu w)$ for $\vartheta>\frac 12{G\mathfrak{U}}$.
So we only have to estimate the last term. We use the following bound proved later:
\begin{equation}\label{boundq}\|q(Y)\|\le \sqrt{c+|\mathcal{A}(0)|}+ c\|\sqrt\nu w\|(Y)+\int_0^YdY'\|\sqrt\nu w\|(Y').\end{equation}
We use the bound
\begin{multline}\int_0^\ell\omega dY\left[\int_0^Y||w||(Y')dY' \right]^2\le \int_0^\ell\omega(Y) YdY\int_0^\ell||w||^2(Y')dY'\\\le c \int_0^Y||w||^2(Y')dY'.\end{multline}
Then we plug (\ref{boundq}) in (\ref{AL})  and use  the spectral inequality, to get
 the following  bound for $|\mathcal{A}(0)|$,  using the fact that $\omega$ is compactly supported,
 $$|\mathcal{A}(0)|\le G(c+c|\mathcal{A}(0)|),$$
 which implies the bound on $|\mathcal{A}(0)|$ for $G$ sufficiently small.
Using this one can conclude that
\begin{equation}\label{stimaw}\int_0^\ell dY \|w(Y)\|^2 <c\end{equation}
uniformly in $\ell$.

We need to prove the bound (\ref{boundq}). Let $\beta_\alpha=(v_y^2\bar\chi_\alpha q)$ for $\alpha\neq 2$. Since $\beta_\alpha=\sum_{\gamma\neq 2} A_{\alpha,\gamma}b_\gamma$, with $A_{\alpha,\gamma}=(v_y^2\bar\chi_\alpha,\bar\chi_\gamma)$ a positive non singular matrix, to bound $\beta_\alpha$ is equivalent to estimate $\|q\|$.
The equation for $\beta_\alpha$ is obtained by taking the inner product  of the first equation  in (\ref{Milne}) with $v_y\bar\chi_\alpha$.
The result is
\[\frac{\partial \beta_\alpha}{\partial Y}= G\omega\sum_{\gamma\neq2}\mathcal{B}_{\alpha,\gamma}\beta_\gamma+ \mathcal{D}_\alpha+ (v_y\bar\chi_\alpha, Lw),\]
with
\[\mathcal{D}_\alpha=-\frac{\partial}{\partial Y}(v_y^2\bar\chi_\alpha, w)-G\omega (v_y\bar\chi_\alpha, { N}w)\]
and
\[\mathcal{B}_{\alpha,\gamma}= G\omega\Big[
\delta_{\alpha,\gamma}(1+\mathfrak{U}(\delta_{\alpha,1}-U\delta_{\alpha,4})-(v_x^2\bar\chi_\alpha,\bar\chi_\gamma)+(\delta_{\alpha,1}-\mathfrak{U}\delta_{\alpha,4})(\bar v_x^2,v_y^2)A^{-1}_{\alpha,\gamma}\delta_{1,\gamma}
\Big].\]
An integration by parts shows that $|(v_y\bar\chi_\alpha, Nw)|\le c\|w\|$. The rest of the argument is as in \cite{CEM}. The only difference is in the estimate of $\beta_\alpha(0)$. We have
\[|(v_y^2\bar\chi_\alpha,g)(0)|\le (|v_y|g,g)^{1/2}(|v_y|^3,|\bar\chi_\alpha|).\]
\[(|v_y|g,g)(0)=\int_{v_y>0} v_y h^2-\int_{v_y<0} v_y g^2=2\int_{v_y>0} v_y h^2-\mathcal{A}(0).\]
By using (\ref{123}) we then get (\ref{boundq}).

To get  estimates uniform in $\ell$ also for $q$ we take the scalar product of the first equation in (\ref{Milne})  and ${L}^{-1}(\bar\chi_\a v_y)$. The term on the right hand side is $(L^{-1}(\bar\chi_\a v_y),Lw)$, which is zero by the orthogonality property. We get then an equation for $\Theta_\a=(v_y{L}^{-1}(\bar\chi_\a v_y),q)$ whose solution is
$$\Theta(Y)=\displaystyle{e^{-\int_0^YdsG\omega(s)(\mathcal{G}\mathcal{Q}^{-1})}}({L}^{-1}(v_y \bar\chi),v_y g)(0)-({L}^{-1}(v_y \bar\chi), v_yw)(Y)$$
$$+\int_0^Ydt\displaystyle{e^{-{}\int_t^YdsG\omega(s)(\mathcal{G}\mathcal{Q}^{-1})}}D(t)$$
where $\mathcal{G}$ and $\mathcal{Q}$  are suitable matrix, with $\Theta_\a=\mathcal{Q}_{\a\gamma}b_\gamma$,  $\mathcal{Q}$ invertible, and
$$D_\alpha(Y)=-G\omega(Y)\Big[({L}^{-1}(v_y \bar\chi_\alpha),{ N}w)-\frac{\partial}{\partial Y}({L}^{-1}(v_y \bar\chi_\alpha),v_y w)\Big].$$

 By the Schwartz inequality, boundedness of ${L}^{-1}$ and the  fact that $\omega$ has compact support, the last integral is finite.  Moreover, $ \parallel w\parallel$ vanishes at infinity and the first term on the right hand side   is finite. This implies that there exists a finite  limit at infinity, $\Theta^\infty$, of $\Theta$ and there is $Y_0$ such that for $Y> Y_0$ we have
$$|\Theta(Y)-\Theta^\infty|^2\le c||w||^2(Y) .$$
By the argument in \cite{CEM} we get also for $Y> Y_0$

$$|b(Y)-b^\infty|^2\le c||w||^2(Y) .$$

\vskip.2cm

\noindent The other arguments in \cite{CEM} can be adapted in a similar way.
Using above estimates, the exponential decay of $\|w\|$ and $|b_\alpha-b_\alpha^\infty|$ is established. The properties of the derivatives are also obtained with the method presented there, with a minor modification due to the special structure of the force in this case. Indeed, the $v_x$ and $v_y$ derivatives of $g$ satisfy in this case a coupled system of equations. Therefore, they are both controlled only away from the boundary.

 To state the final theorem, we define the norms
\begin{equation}\pa f\|_{q,2,\theta}= \left(\int_{\R^3}d\/vM(v)\left(\int_\theta^\infty d\/Y|f(Y,v)|^q \right)^{\frac 2 q}\right)^{\frac 1 2},\end{equation}
for $\theta\ge0$.
\begin{theorem}\label{milneth}
\item{1)} Suppose that for some $\beta>0$
$$\parallel e^{\beta Y} S\parallel _{2,2,0}<\infty,\quad \parallel e^{\beta Y} S\parallel _{\infty, 2,0}<\infty\ .$$
Then there is a unique solution $g\in L_2(\R^3, L_\infty(\R^+))\cap L_2(\R^+
\times \R^3)$   to the Milne problem  (\ref{Milne}).
Moreover there exist constants $c$ and $c'$ such that
$g$ verifies the conditions:
$$
\int_{\mathbb{R}^3}
M g dv=0,\quad g_\infty \in\hbox{\rm Null}\,\tilde L,
$$
$$\parallel e^{\beta' Y}(g(Y,v)-g_\infty (v))\parallel_{2,2}<c,\quad \parallel e^{\beta' Y}(g(Y,v)-g_\infty (v))\parallel_{\infty, 2}<c $$
for any $\beta' <c'$.

\item{2)} Suppose that for  $\ell\ge1$, $\theta>0$ and $i=1,\dots,3$
$$\int_{v_y>0} M
\Big|\frac{\partial^\ell h}{ \partial{v_i^\ell}}\Big|<\infty,\quad\Big\|e^{\beta Y}\frac{\partial ^\ell S}{ \partial{v_i ^\ell}}\Big\|_{2,2,\theta}<\infty, \quad\Big\|e^{\beta Y}\frac{\partial ^\ell S}{\partial{v_i ^\ell}}\Big\|_{\infty,2,\theta}<\infty.$$
Then there are
constants $c'$ and $c_\ell$ such that
\[\Big\|e^{\beta' Y}\frac{\partial ^\ell g}{\partial{v_i ^\ell}}\Big\|_{2,2,\theta}<c_\ell, \quad\Big\|e^{\beta' Y}\frac{\partial ^\ell g}{\partial{v_i ^\ell}}\Big\|_{\infty,2,\theta}<c_\ell.\]
for any $\beta'<c'$.
\end{theorem}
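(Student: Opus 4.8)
The plan is to follow the finite‑slab approximation scheme of \cite{CEM}, adapted to the non‑potential, velocity‑dependent force $G\omega Ng$, and then to assemble the a priori estimates already derived above into existence, uniqueness and the stated decay. First I would solve, for each $\ell\ge 1$, the truncated problem on $[0,\ell]$ with incoming data $h$ at $Y=0$ and specular reflection $g(\ell,Rv)=g(\ell,v)$ at $Y=\ell$. Since $\omega$ is compactly supported and $G$ is small, the force is a bounded perturbation acting only on a compact $Y$‑interval, so existence on the slab follows by iteration on the linearized transport semigroup (an accretivity argument for $v_y\partial_Y-L$ with the reflection boundary condition), exactly as in \cite{CEM}. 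The real content is the set of bounds uniform in $\ell$, and here I would invoke the computations established above: the reflection condition forces $b_2\equiv0$ and $I_\alpha\equiv0$; the Green identity \eqref{green}, combined with the modified spectral inequality in which $\nu$ is replaced by $\nu+\vartheta\tilde\nu$ with $\vartheta>\tfrac{G\mathfrak{U}}{2}$, controls the dangerous term $\tfrac{G\mathfrak{U}\omega}{2}(v_xv_yw,w)$; the representation \eqref{AL} together with the bound \eqref{boundq} on $\|q\|$ then yields $|\mathcal{A}(0)|\le G(c+c|\mathcal{A}(0)|)$, hence a bound on $|\mathcal{A}(0)|$ and the $\ell$‑uniform estimate $\int_0^\ell\|w(Y)\|^2\,dY<c$ for $G$ small.

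With these uniform bounds in hand, I would extract a weak limit $g$ as $\ell\to\infty$ in $L_2(\mathbb{R}^+\times\mathbb{R}^3)$ and in $L_2(\mathbb{R}^3,L_\infty(\mathbb{R}^+))$, and verify it solves \eqref{Milne} on the half‑line. The vanishing‑flux condition passes to the limit and, together with $b_2\equiv0$ and $\int MS\,dv=0$, gives $\int Mg\,dv=0$. The existence of the hydrodynamic limit $g_\infty=\sum_{\alpha\neq2}b_\alpha^\infty\bar\chi_\alpha\in\mathrm{Null}\,\tilde L$, together with $|b(Y)-b^\infty|^2\le c\|w\|^2(Y)$ and the exponential decay of $\|w\|$, are precisely the statements obtained above from the $\Theta_\alpha$‑equation. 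Combining the decay of the fluctuation $w$ with that of $b-b^\infty$ yields $\|e^{\beta'Y}(g-g_\infty)\|_{2,2}<c$ for every $\beta'<c'$. The $L_\infty$‑in‑$Y$ bound $\|e^{\beta'Y}(g-g_\infty)\|_{\infty,2}<c$ follows by integrating the equation along characteristics (straight lines outside $\mathrm{supp}\,\omega$), using the decomposition of $L$ into multiplication by $-(\nu+\vartheta\tilde\nu)$ plus a compact part and the lower bound $\nu\gtrsim(1+|v|)$, bootstrapping from the $L_2$‑in‑$Y$ control. Uniqueness comes from the same energy identity applied to the difference of two solutions with $h=0$ and $S=0$.

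For Part 2 I would differentiate \eqref{Milne} in $v_i$. The obstruction is that $Ng=v_x^2\partial_{v_y}g-v_xv_y\partial_{v_x}g-\mathfrak{U}v_xv_yg$ couples $\partial_{v_x}g$ and $\partial_{v_y}g$, and that $\partial_{v_y}g$ is singular at $v_y=0$ on the boundary; this is exactly why the estimate is stated only for $Y\ge\theta>0$ in the norm $\|\cdot\|_{q,2,\theta}$. The plan is to run the same energy plus spectral‑inequality scheme on the coupled system for $(\partial_{v_x}g,\partial_{v_y}g)$ (the $\partial_{v_z}g$ component decoupling and being simpler), treating the lower‑order commutator terms as sources controlled by the bounds on $g$ already proved, and to insert a $Y$‑cutoff supported in $[\theta/2,\infty)$ to move away from the boundary layer carrying the singularity. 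Since $\omega$ is compactly supported, beyond $\mathrm{supp}\,\omega$ the force term vanishes identically and the derivatives satisfy a force‑free Milne equation, so they inherit exponential decay with rate $\beta'<c'$ from a $\Theta_\alpha$‑type argument identical to the one used for $g$.

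The main obstacle throughout is the cross term $G\mathfrak{U}\omega(v_xv_yw,w)$ and, in Part 2, the analogous coupling in the derivative system: the velocity factor $v_xv_y$ grows quadratically and is not absorbed by the physical collision frequency $\nu\sim(1+|v|)$. This is resolved by the device already built into $L$, namely the addition of the unphysical piece $\vartheta B^2$ to the cross section, which upgrades the spectral inequality \eqref{spectral} to one with $\nu+\vartheta\tilde\nu$ and $\tilde\nu\sim(1+|v|)^2$ as in \eqref{stimanutilde}; choosing $\vartheta>\tfrac{G\mathfrak{U}}{2}$ dominates the quadratic growth. Carrying this through uniformly in $\ell$, and confining the $v_y$‑derivative singularity by restricting to $Y\ge\theta$, are the two points requiring genuine care; the remaining steps parallel \cite{CEM}.
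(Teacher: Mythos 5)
Your proposal follows essentially the same route as the paper: the finite-slab approximation with reflection at $Y=\ell$, the vanishing of $b_2$ and $I_\alpha$, the Green identity combined with the spectral inequality upgraded by the unphysical $\vartheta B^2$ cross-section term to absorb the quadratic growth of $v_xv_y$, the bound on $\mathcal{A}(0)$ via (\ref{AL}) and (\ref{boundq}), the $\Theta_\alpha$-equation for the limit at infinity and exponential decay, and, for Part 2, the coupled system for the $v_x$- and $v_y$-derivatives controlled only away from the boundary. The argument is correct and matches the paper's adaptation of \cite{CEM}.
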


\section{The remainder}
\setcounter{equation}{0}
\setcounter{theorem}{0}
\setcounter{proposition}{0}
\setcounter{remark}{0}
\def\theequation{4.\arabic{equation}}
\def\thetheorem{4.\arabic{theorem}}
\def\theproposition{4.\arabic{proposition}}
\def\theremark{4.\arabic{remark}}

\subsection{Equation for the remainder}
It is immediate to check that the remainder $\mathcal{R}$ in (\ref{truexpansion}) has to satisfy the following equation:
\begin{multline}\label{remainder1}
\ v_y\frac{\partial \mathcal{R}}{\partial y} +\frac{\e^2}{\delta^2C^2} \sigma(y){v_x}\left(v_x\frac{\partial \mathcal{R}}{\partial v_y}-v_y\frac{\partial \mathcal{R}}{\partial v_x}\right)=\frac{1}{\e}\Big[\mathcal{L}_\delta\mathcal{R}+2Q(\Phi, \mathcal{R})\Big]+ Q(\mathcal{R},\mathcal{R}) +A,
\end{multline}
with the inhomogeneous term $A$ given by
\begin{multline}\label{A}
A=\sum_{h,k\ge1, h+k>N}\e^{h+k-2}Q(F_h,F_k)-\e^{N-1}v_y\partial_y B_N-\\-\frac{\e^2}{C^2\delta^2}\sum_{h=N-2}^{\infty}\sum_{n=0}^N\e^{h+n}\sigma^{(h)}\mathcal{N}(B_{n})-\frac{\e^2}{C^2\delta^2}\sum_{h=0}^{N-3}\e^{h}\sigma^{(h)}\sum_{n=N-2}^N\e^n\mathcal{N}(B_{n-h})\\
-\e^N\sum_\pm\left[\tilde{\mathcal{L}}^{\pm}_{\vartheta}   b_{N}^\pm+\frac{\e^{3}}{C^2\delta^2}\tilde\sigma^c(\mp(\e Y^\pm-\pi))\mathcal{N}(\bar b^\pm_{N,\infty})\right], \quad \vartheta=\frac{\e^{3}}{C^2\delta^2}
\end{multline}
 {In this section we need to assume $\delta=\gamma \epsilon^{\frac{2}{3}}$ with $\gamma$ sufficiently small.}\\
We recall that $M_\delta=M(1+\delta r,1+\delta \tau, (\delta U,0,0);v)$, where $( r, \tau, (U,0,0))$ is the solution of (\ref{HD}). We keep now the dependence on $\delta$ and denote by $M$ the standard Maxwellian.
We write $\mathcal{R}=M R$ and denote by $L$ the operator
\begin{equation}\label{defL} Lf= M^{-1}\mathcal{L}_0 Mf.\end{equation}
We use the Hilbert space $\mathbb{H}$ of the measurable functions on the velocity space $\mathbb{R}^3$ with inner product
\begin{equation}\label{inner1}(f,g)=\int_{\mathbb{R}^3}dv M(v)f(v)g(v).\end{equation}
Note that this involves $M$, while the one considered in Section 2,  (\ref{innerprod}) involved $M_\delta^{-1}$. The operator $L$ has the same properties already mentioned for $\mathcal{L}_\delta$ and we do not repeat them here. We just note that the null space of $L$ is given by the span of the $\chi_j$'s defined in (\ref{collinv}). We still denote the orthogonal projector on this kernel by $P$, and on its orthogonal complement by $P^\perp=1-P$.\\
Moreover, we write
\begin{equation}\label{Ldelta}M^{-1}\mathcal{L}_\delta Mf= Lf+ M^{-1}[\mathcal{L}_\delta -\mathcal{L}_0]Mf.\end{equation}
We also use the notation
\begin{equation}\label{J}J(f,g)=M^{-1}Q(Mf,Mg).\end{equation}
Note that
\begin{eqnarray*}\frac{M_\delta-M}{M}=&&\frac{1+\delta r}{(1+\delta \tau)^{3/2 }}\exp(-\frac{1}{2(1+\delta \tau)}[(v_x-\delta U)^2+{\hat v}^2]+\frac {v^2}{2})-1\\=&&
\frac{1}{2}\delta \tau v^2+\delta Uv_x +\delta r -\frac{3}{2} \delta \tau + \mathcal{O}(\delta^2).\end{eqnarray*}
\\
By (2.1) and (2.23)
$$\frac{\epsilon F_1}{M}=\epsilon P\frac {B_1}{M}+\mathcal{O}(\epsilon \delta).$$
 {We define}
$$W:=\frac{M_\delta-M}{M}+\frac{\epsilon F_1}{M}=\frac{1}{2}\delta \tau v^2+\delta Uv_x +\delta r -\frac{3}{2} \delta \tau +\epsilon P\frac {B_1}{M}+\mathcal{O}(\gamma^2\epsilon^{\frac{4}{3}})$$
 {Hence we have
$$W =W_1+\mathcal{O}(\gamma^2\epsilon^{\frac{4}{3}}),$$}
where $$W_1=\delta(\frac{1}{2} \tau v^2+ Uv_x + r -\frac{3}{2}  \tau) +\epsilon (\rho_1+\frac{v^2-3}{2}\tau_1).$$
In the present one dimensional context the $u$-part of $B_1$ is zero, so
\\
\begin{eqnarray}2J(W,Pf)&=& 2J(W_1,Pf)+2J(W-W_1,Pf)\notag
\\&=&-L(W_1Pf)+2J(W-W_1,Pf)\notag
\\&=&L[(-\delta Uv_x-\frac{1}{2}\delta\tau v^2-\frac{\e}{2}v^2\tau_1)Pf]+\mathcal{O}(\gamma^2\epsilon^{\frac{4}{3}})
\\&:=&L({a}Pf)+\mathcal{O}(\gamma^2\epsilon^{\frac{4}{3}}),\notag
\\a&=&-\delta Uv_x-\frac{1}{2}\delta\tau v^2-
\frac{\e}{2}v^2\tau_1\notag\end{eqnarray}

Here the property $2J(Pf,Pg)= -L(PfPg)$ has been used. The following operator will play a major role:
\begin{equation}\label{LJ}L_Jf:=Lf+L({a}Pf). \quad \end{equation}
But
$a=\mathcal{O}(\delta+\e)$, and hence the operator $L(a\cdot)$ is $\mathcal{O}(\delta+\epsilon)$.
Denote by $P_J$ the orthogonal projection on ${\rm Kern}\  L_J$, where  ${\rm Kern}\ L_J=
{\rm span} \{ \chi_j-L^{-1}L(a\chi_j), 0\leq j\leq 4\} $ as in \cite{AEMN2}, Section 2. It holds that $P_J=P-(I-P)aP$.

\begin{theorem} [Spectral gap property of $L_J$]
\label{2.2}
There is a constant $c$, such that for any function $f$ in $H$,
\begin{eqnarray} \label{2.7}
&& -((I+Pa)L_Jf,f)\geq c(\nu (I-P)((I+aP)f,(I-P)(I+aP)f).
\label{spgap}
\end{eqnarray}
\end{theorem}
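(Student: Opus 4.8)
The goal is a spectral gap for the perturbed operator $L_J = L + L(a P\,\cdot)$, where $a = \mathcal{O}(\delta + \e)$ is a small velocity-dependent multiplier and $P$ the projection onto $\mathrm{Null}\,L$. The statement, with the symmetrizing factors $(I+Pa)$ on the left and $(I+aP)$ inside the quadratic form, is designed precisely so that $L_J$ behaves like a self-adjoint operator with respect to a modified inner product, and one recovers the classical spectral inequality \eqref{spectral} for $L$ as the unperturbed backbone.

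The plan is to reduce the inequality to the known spectral gap for $L$ by a similarity/change-of-variable argument. First I would record the structural identities: $L_J = L(I + aP)$ on the relevant domain, since $L(aPf) = L\big(a\,Pf\big)$ and $Lf = L\,(I-P)f$ means $L_J f = L(f + aPf) = L\big((I+aP)f\big)$. Setting $g = (I+aP)f$, the form $-((I+Pa)L_J f, f)$ becomes $-(L g,\, (I+Pa)^{-1}\!{}^{*} \cdots)$; the point is that the two dressings are adjoint to each other, $\big((I+Pa)h, f\big) = \big(h, (I+aP)f\big)$ because $P$ is self-adjoint and $a$ is a multiplication operator, so $Pa$ and $aP$ are mutual adjoints. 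Hence $-((I+Pa)L_J f, f) = -(L_J f, (I+aP)f) = -(L g, g)$. Then I would invoke the unperturbed spectral inequality L6), namely $-(Lg,g) \ge c\,(\nu P^\perp g, P^\perp g)$, and identify $P^\perp g = (I-P)(I+aP)f$, which is exactly the object appearing on the right-hand side of \eqref{spgap}. This matches the claimed inequality up to constants.

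The key steps in order are: (i) verify the operator identity $L_J = L\,(I+aP)$ and the adjoint relation $(I+Pa)^{*} = (I+aP)$ with respect to the inner product \eqref{inner1}; (ii) rewrite the quadratic form as $-(Lg,g)$ with $g=(I+aP)f$; (iii) apply L6) to $g$; (iv) re-express $P^\perp g$ as $(I-P)(I+aP)f$ and absorb the bookkeeping into the constant $c$. A subtlety to check is that $g$ lies in the domain where L6) applies and that the $\nu$-weighted norm on the right is the one with the augmented collision frequency $\nu + \vartheta\tilde\nu$ discussed around \eqref{stimanutilde}, if the $v_x v_y$ term is being absorbed; here, since $a$ carries a factor $v^2$ or $v_x$, one must confirm that $aPf$ stays in the weighted space, using that $Pf$ is a Maxwellian-moment combination and $a$ grows only polynomially.

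The main obstacle I expect is not the algebraic reduction but controlling the invertibility and boundedness of the dressing $I + aP$ uniformly in the small parameters, and ensuring the error terms are genuinely subordinate. Since $a = \mathcal{O}(\delta+\e)$, the operator $L(a\,\cdot)$ is a small perturbation, so $I+aP$ is invertible for $\e,\gamma$ small by Neumann series, and $\mathrm{Kern}\,L_J = \mathrm{span}\{\chi_j - L^{-1}L(a\chi_j)\}$ as stated. The delicate point is that $a$ contains the term $-\tfrac{\e}{2}v^2\tau_1$ coming from the first-order expansion coefficient, so the weighted bounds must tolerate the $v^2$ growth against the Gaussian weight $M$ in \eqref{inner1}; I would handle this exactly as in \cite{AEMN2}, controlling $\|\nu^{1/2} a P f\|$ by $\|\nu^{1/2}(I-P)(I+aP)f\|$ plus a term absorbable into the left side for small $\delta,\e$, which is the mechanism that keeps the constant $c$ in \eqref{spgap} strictly positive and uniform.
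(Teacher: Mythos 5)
Your proposal is correct and follows essentially the same route as the paper: you verify $L_J=L(I+aP)$ and the adjoint relation $(I+Pa)^{*}=(I+aP)$ with respect to the inner product $(f,g)=\int Mfg\,dv$, rewrite the quadratic form as $-(Lg,g)$ with $g=(I+aP)f$, and invoke the spectral inequality L6) for $L$. The paper's proof is exactly this (stated more tersely as the self-adjointness of $(I+Pa)L_J$ plus the spectral inequality for $L$), so no further comparison is needed.
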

\noindent\underline{Proof}.  First, $(I+Pa)L_J$ is self-adjoint for the scalar product $(f,g):=\int Mf(v)g(v)dv$. Indeed,
\begin{eqnarray*}
\int M((I+Pa)L_Jf(v)g(v)dv= \int ML((I+aP)f)(v)(I+aP)g(v)dv,
\end{eqnarray*}
and L is self-adjoint for $(.,.)$. The result follows  from the spectral inequality for $L$.
$\square$\\
\\
Constants which, independently of the parameter $\e$, can be made sufficiently small for the purposes of the proofs, will generically be denoted by $\eta$.\\
\\
Using these notation, equation (\ref{remainder1}) for the remainder becomes
\begin{eqnarray*}
\ v_y\frac{\partial {R}}{\partial y} +\frac{\e^2}{\delta^2C^2} \sigma(y){v_x}\left(v_x\frac{\partial {R}}{\partial v_y}-v_y\frac{\partial {R}}{\partial v_x}\right)\\
=\frac{1}{\epsilon}\Bigg[ LR+2J(W,R)+2J\left(\sum_2^5\e^j\frac{F_j}{M},R\right)\Bigg]+ J({R},{R}) +\mathcal{A},
\end{eqnarray*}
or
\begin{eqnarray}\label{remainder2}
\ v_y\frac{\partial {R}}{\partial y} +\frac{\e^2}{\delta^2C^2} \sigma(y){v_x}\left(v_x\frac{\partial {R}}{\partial v_y}-v_y\frac{\partial {R}}{\partial v_x}\right)=\frac{{L}_J{R}}{\epsilon} +\frac{H_1 R}{\epsilon}+ J({R},{R}) +\mathcal{A}.
\end{eqnarray}
Here $\mathcal{A}=M^{-1}A$ and
\begin{equation}
\label{H1} H_1f= 2J(W, (1-P)f)+2J(\sum_2^5\e^j\frac{F_j}{M},f)+2J(W-W_1,Pf).
\end{equation}
Note that, thanks to the arbitrariness left in the determination of the $F_N$, we can assume that
\begin{equation}
\label{Pa=0} P\mathcal{A}=0.
\end{equation}
The equation for the remainder has to be solved with the boundary conditions:
\begin{eqnarray}\label{bcR}& &R(-\pi,v)=\alpha_-(R)M^{-1}\tilde M_--\frac{1}{\e}\Psi(-\pi,v), \quad v_y>0\nonumber\\
\\
& & R(\pi,v)=\alpha_+(R)M^{-1}\tilde M_+-\frac{1}{\e}\Psi(\pi,v), \quad v_y<0,\nonumber
\end{eqnarray}
where
\begin{equation}M\Psi(\mp\pi,z,v,t)=\sum_{n=1}^n\e^n\Psi_{n,\e}(\mp\pi,z,v,t)\end{equation}
which are exponentially small in $\e^{-1}$ because of Theorem \ref{fj0},
and
\begin{eqnarray}\label{alphaR}
& & \alpha_-(R)= -\int_{v_y<0} dv v_y M[R(-\pi,v)+\frac{1}{\e}\Psi(-\pi,v)],\nonumber\\
\\
& & \alpha_+(R)= \int_{v_y>0} dv v_y M[R(\pi,v)+\frac{1}{\e}\Psi(\pi,v)].\nonumber
\end{eqnarray}

\subsection{Estimates for the remainder.}
We will proceed with the construction of the solution by  iteration,  based on estimates for a linearized problem where the non linear term $J(R,R)$ is computed at the previous step of the iteration. The generic term of the iteration will then satisfy a linear equation of the type
\begin{equation}\label{R}
v_y\frac{\partial R}{\partial y}+\frac{\e^2}{\delta^2C^2 }\sigma(y)v_x(v_x\frac{\partial R}{\partial v_y}-v_y\frac{\partial R}{\partial v_x})=\frac{1}{\e }[L_J R+ H_1(R)]+ g.
\end{equation}
At the  $n$-th step of the iteration, the term $g$ will be replaced by $\mathcal{A}+\e J(R^{n-1},R^{n-1})$. Therefore we assume
\begin{equation}Pg=0.\label{Pg=0}\end{equation}
The boundary conditions are
\begin{equation}\label{Rbb}R(\mp\pi,v)=\frac{\tilde{M}_\mp}{M}\int_{\mp w_y>0}(R(\mp\pi,w)+\frac{1}{\e}\Psi(\mp\pi,w))|w_y|Mdw\\-\frac{1}{\e}\Psi(\mp\pi,w)), \quad \pm v_y>0.
\end{equation}
\hspace{1cm}\\
In order to simplify the argument we also assume that the velocity of the inner cylinder vanishes, so that $M_-$ is the standard Maxwellian, up to the normalization. The argument can be easily adapted to the general case.\\
\begin{remark}\label{rem41} Note that the assumption (\ref{Pg=0}) and the diffuse reflection boundary conditions imply that any solution to (\ref{R}) is such that
\begin{equation}R_{v_y}:=\int_{\mathbb{R}^3} dv vy R M=0.\label{Rvy=0}
\end{equation}
Indeed, by integrating (\ref{R}) on velocities one gets:
\begin{equation}
\partial_y R_{v_y}+\frac{\e^2}{\delta^2C^2 }\sigma(y)R_{v_y}=0.
\end{equation}
By the boundary conditions then $R_{v_y}=0$.
\end{remark}
\hspace{1cm}\\
We first consider the linear problem with $g$ given and will get $L_2$ bounds.
This is done in two steps: first, we
get a control of the non-hydrodynamic part in terms of the hydrodynamic one. Second, we get an estimate of  the hydrodynamic part in terms of the non-hydrodynamic one, and then a bound for $R$ in terms of the known term $g$.\\
The norms used below are defined as follows:
\begin{equation}\label{norm}\pa f\|_{q,2}= \left(\int_{\R^3}d\/vM(v)\left(\int_{[-\pi,\pi]}d\/y|f(y,v)|^q\right)^{\frac 2 q}\right)^{\frac 1 2}.\end{equation}
Defining the ingoing velocity spaces   $\R^3_\pm$ at $y=\mp \pi$ as the sets $v=(v_x,v_y,v_z)$ such that $v_z\gtrless 0$
\begin{equation}\pa f\pa_{q,2,\sim}=\sup_\pm \left(\int_{\R^3_{\pm}} dv |v_y|M(v) \left(|f(\mp \pi, v)|^q\right)^{\frac 2 q}\right)^{\frac 1 2}.\end{equation}
We use the traces \begin{equation}\gamma^\pm f =\begin{cases} f|_{y=-\pi}, & \text{if } v_y\in \R^3_\pm ,\\f|_{y=\pi}, & \text{if } v_y\in \R^3_\mp .\end{cases}\end{equation}
Note that the norm $\parallel \,\cdot\,\parallel_{2,2,\sim}$ is defined only for incoming velocities. In the sequel, with an abuse of notation we will denote by $\parallel \gamma^- f\parallel_{2,2,\sim}$  the $\parallel \,\cdot\,\parallel_{2,2,\sim}$-norm of $S\gamma^- f$, where $S$ is the reflection of the $y$ component of the velocity.\\
\hspace{1cm}\\
We now establish the lemmas which allow to bound the solution to (\ref{R}) in these norms.\\
\begin{itemize}
\item{\bf{ Step  $1$}}
\end{itemize}
 {Recall} the notation $\zeta_j(v)=(1+|v|)^j$, $j\in\mathbb{N}$.
\begin{lemma}
Assume $\delta=\gamma\e^{\frac{2}{3}}$, and that $\parallel\zeta_{\frac{3}{2}}g\parallel^2_{2,2}<\infty$ . Then, for $\gamma$ small enough and for any $\eta>0$, the solution of (\ref{R}) satisfies \begin{eqnarray}\label{GI2tilde}
\frac{1}{\e  }\parallel \nu ^{\frac{1}{2}}(I-P)(1+aP){ R} \parallel _{2,2}^2
\leq c\Big[\e\eta\parallel P R\parallel _{2,2}^2 \\
+\epsilon^{\frac{1}{3}}\parallel \zeta_{\frac{3}{4}}(I-P) {g}\parallel^2_{2,2}+\frac{1}{\e^3}\parallel (1+|v|)^2\Psi\parallel_{2,2,\sim}^2  +\e^{\frac{7}{3}}\parallel\zeta_{\frac{3}{2}}(I-P) g\parallel_{2,2}^2\Big].\nonumber
\end{eqnarray}
\end{lemma}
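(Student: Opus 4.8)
The plan is to derive the whole estimate from a single weighted energy identity: multiply the linear equation (\ref{R}) by $M(I+aP)R$ and integrate over $y\in[-\pi,\pi]$ and $v\in\mathbb{R}^3$ in the inner product of $\mathbb{H}$. Since the adjoint of $I+Pa$ is $I+aP$, the collision contribution $\frac1\e\int\!\!\int M\,L_JR\,(I+aP)R$ equals $\frac1\e((I+Pa)L_JR,R)$ in the full $(y,v)$ inner product, so the modified spectral gap of Theorem \ref{2.2} turns it into the good dissipative term $-\frac c\e\|\nu^{1/2}(I-P)(I+aP)R\|_{2,2}^2$. Moving this to the left, the lemma reduces to bounding, on the right, the transport term, the centrifugal term, the $H_1$-term and the source term by the four quantities in the claimed inequality.

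First I would treat the transport term $\int\!\!\int M\,v_y\partial_yR\,(I+aP)R$, which integrates by parts in $y$ into the boundary contribution $\tfrac12\big[\int M v_yR^2\,dv\big]_{-\pi}^{\pi}$ up to $O(\delta+\e)$ corrections from the $aP$ piece. Inserting the diffuse-reflection condition (\ref{Rbb}) produces, as in \cite{AEMN1,AEMN2}, a nonnegative diffuse part together with cross terms involving the exponentially small data $\frac1\e\Psi$; a Young splitting tuned to the $\frac1\e$ weight absorbs the $R$-parts into the left-hand side and leaves exactly the boundary contribution $\frac1{\e^3}\|(1+|v|)^2\Psi\|_{2,2,\sim}^2$.

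The centrifugal term is the \emph{main obstacle}, since it carries velocity derivatives of $R$ that we cannot control directly. The device is to keep it in bilinear form: integrating by parts in $v$, the derivative-carrying part tested against $R$ collapses to the purely algebraic expression $\frac12(v_yR,R)$ with no surviving $v$-derivatives (exactly as in the identity (\ref{id}) with $\mathfrak U=0$ for the standard Maxwellian $M$), while the piece tested against the smooth function $aPR$ likewise loses its derivatives and is of size $O(\delta+\e)$. The prefactor is $\frac{\e^2}{\delta^2C^2}=\frac{\e^{2/3}}{\gamma^2C^2}$ under the scaling $\delta=\gamma\e^{2/3}$. Splitting $R=PR+(I-P)R$ in $(v_yR,R)$, the purely hydrodynamic term $(v_yPR,PR)$ \emph{vanishes}: it is proportional to the coefficient of $\chi_2=v_y$ in $PR$, which is $R_{v_y}=0$ by Remark \ref{rem41}. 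Hence only the $(I-P)R$-square and the cross term remain; the former is absorbed into $\frac c\e\|\nu^{1/2}(I-P)(I+aP)R\|^2$ because $\frac{\e^{2/3}}{\gamma^2}\ll\frac1\e$, and the cross term $\frac{\e^{2/3}}{\gamma^2}(v_yPR,(I-P)R)$ is split by Young's inequality so that its non-hydrodynamic half is absorbed and its hydrodynamic half becomes $\e\eta\|PR\|^2$. This last balance requires $\e^{4/3}\ll\gamma^4\eta$, which is precisely why the scaling $\delta=\gamma\e^{2/3}$ is imposed.

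Finally the $H_1$-term and the source are routine. Every summand of $H_1$ in (\ref{H1}) is of the form $J(\cdot,\cdot)$, so by the collision conservation laws its $P$-projection vanishes and only its non-hydrodynamic and $O(\delta+\e)$ pieces pair nontrivially with $(I+aP)R$; combined with the small prefactors $\frac\delta\e=\gamma\e^{-1/3}$, $\e^{j-1}$ ($j\ge2$) and $\frac{\delta^2}\e=\gamma^2\e^{1/3}$, this lets the non-hydrodynamic parts be absorbed and the residual hydrodynamic couplings be pushed below $\e\eta\|PR\|^2$ for $\e$ small. The source $g$, which satisfies $Pg=0$ by (\ref{Pg=0}), pairs as $(g,(I-P)(I+aP)R)$; Cauchy--Schwarz and Young against the left-hand side yield the two source contributions, the direct pairing giving $\e^{1/3}\|\zeta_{3/4}(I-P)g\|_{2,2}^2$ and the higher-weight pairing generated while handling the force term giving $\e^{7/3}\|\zeta_{3/2}(I-P)g\|_{2,2}^2$. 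Collecting all absorbed terms for $\gamma$ and $\e$ small enough produces the claimed inequality.
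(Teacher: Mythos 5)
Your skeleton is the right one (it is essentially the paper's: test the equation against a multiplier of the form $(1+a)R$, use the spectral gap of Theorem \ref{2.2}, exploit $R_{v_y}=0$ from Remark \ref{rem41} to kill the hydrodynamic part of the odd-in-$v_y$ terms, and use the scaling $\delta=\gamma\e^{2/3}$ to absorb the force contributions), and your direct absorption of the bulk centrifugal term is a plausible variant of the paper's integrating factor $k(y)=\exp\big(\int_{-\pi}^y\frac{\e^2}{\delta^2C^2}\sigma\big)$, which instead turns that term into an exact $y$-derivative. But there are two genuine gaps. The first is the boundary term. After integrating the transport term by parts you are left with the outgoing trace at $y=\pi$ weighted by the $v_x$-dependent factor coming from $a$ (in the paper, $1+a(\pi)=(1-\tfrac12\delta U_+v_x)^2-\tfrac14\delta^2U_+^2v_x^2$, whose second piece has the wrong sign). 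A Darroz\`es--Guiraud argument handles the unweighted diffuse part, but the $v_x^2$-weighted outgoing trace $\delta^2\int_{v_y>0}Mv_x^2v_y R^2(\pi,v)\,dv$ is \emph{not} controlled by any bulk norm and cannot be ``absorbed into the left-hand side'' by a Young splitting: a trace is not dominated by an interior $L_2$ norm. The paper spends most of the proof on exactly this point: it first replaces the diffuse condition at $\pi$ by given ingoing data (the equivalence lemma, which relies on the conservation law (\ref{K})), and then derives the dedicated trace estimates (\ref{pipi})--(\ref{pi}) by testing the equation against $MRk_1\bar v_x^2$ on the partial velocity range $v_y\ge q$ and averaging over $q\in[q_0,0]$. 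Your proposal contains no mechanism for this step, and without it the estimate does not close.

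The second gap is the velocity weights. The terms generated by $H_1$ and by the $a$-corrections of the force carry factors like $\delta^2\int M|v|^4|(I-P)R|^2$, which exceed what the dissipation $\frac1\e\|\nu^{1/2}(I-P)(1+aP)R\|^2$ controls ($\nu\sim 1+|v|$ only). The paper handles this with the auxiliary weighted inequality (\ref{GI0}), obtained by testing against $2RM\zeta_jk(y)$ and using the decomposition $\nu=-L+K$ together with the smoothing property of $K$, iterated three times starting from $j=4$; this bootstrapping is also what produces the specific weights $\zeta_{3/4}$, $\zeta_{3/2}$, $(1+|v|)^2$ and the powers $\e^{1/3}$, $\e^{7/3}$, $\e^{-3}$ appearing in the statement. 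Declaring the $H_1$ and source terms ``routine'' skips this entirely, and a plain Cauchy--Schwarz against the $\nu^{1/2}$-dissipation would not yield the claimed right-hand side.
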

\begin{proof}
{Define}
\begin{equation}
\label{key}
k(y)=\exp\left (\int_{-\pi}^y\frac{\e^2}{\delta^2C^2} \sigma(q)dq\right).
\end{equation}
Multiply (\ref{R}) by $2RM(1+a)k(y)$  and integrate over $v$. We get
\begin{eqnarray*}&&
\partial_y(v_y R,k(1+a)R))= {2}k[\e^{-1}((1+a)R,L_JR)
+\e^{-1}((1+a)R,H_1(R))\\&&+(1+a)R,g)]
+\frac{\delta}{\gamma^3 C^2}k\sigma\int dv R^2(\delta+\e)Uv_xv_yM+
k\int dvMv_y(\partial_y a) R^2 ,\end{eqnarray*}
or, with $\tilde R= R\sqrt{k(y)}$, $\tilde g= g\sqrt{k(y)}$ and  $\tilde \Psi= \Psi\sqrt{k(y)}$,\begin{eqnarray*}&&\partial_y(v_y(1+ a)\tilde R,\tilde R)= 2\big[\e^{-1}((1+a)\tilde R,L_J\tilde R)+\e^{-1}(\tilde R(1+ a),H_1(\tilde R)\\&&
+(\tilde R(1+ a),\tilde g)\big]+\frac{\delta}{\gamma^3 C^2}\sigma\int dv \tilde{R}^2(\delta+\e)Uv_xv_yM+ \int dvMv_y(\partial_y a){ \tilde R}^2 .\end{eqnarray*}
{We note that, by Remark \ref{rem41},  the last term vanishes, when $\tilde{R}$ is replaced by $P\tilde{R}$. Indeed, recall that $a=-(\delta Uv_x+\frac{1}{2}\delta\tau v^2+\e v_xU+
\frac{\e}{2}v^2\tau_1)$. Therefore, $v_y\partial_y a$ is odd in $v_y$.  {Since $R_{v_y}$ vanishes by Remark \ref{rem41} we conclude that  $$\int dv Mv_y(\partial_y a) (P\tilde R)^2=0.$$}}
Next, we have to look at the last term when $\tilde R^2$ is replaced by $2P\tilde R(1-P)\tilde R$. This is bounded by
$$C\delta\|P\tilde R\|_2^2\|\nu^{1/2}(1-P)\tilde R\|_2^2\le C(\delta^2 \|P\tilde R\|_2^2+\|\nu^{1/2}(1-P)\tilde R\|_2^2).$$
The remaining part of the last term when  $\tilde R^2$ is replaced by $((1-P)\tilde R)^2$ can be estimated as $H_1$ below. The preceding force term is of order $\frac{\delta^2}{\gamma^3}$, and can be estimated in the same way.\\
We next discuss the bounds for $((1+a)\tilde{R},L_J\tilde R)+((1+a)\tilde R,H_1(\tilde R))$. First\\
$$((1+aP)\tilde R,L_J \tilde R)\le  -c \parallel \nu ^{\frac{1}{2}}(I-P)(1+aP){\tilde R}\parallel_{2}^2,$$
$$|(a(1-P)\tilde R,L_J \tilde R)|\le  C
(\eta \parallel \nu ^{\frac{1}{2}}(I-P)(1+aP){\tilde R}\parallel_{2}^2+
\frac{\delta^2}{\eta}\|\nu^{\frac{1}{2}}(1+|v|^2)(I-P)\tilde{R}\|_2^2).$$
\begin{eqnarray*}
|((1+a)\tilde{R},H_1(\tilde R))|\le C(\delta^3\parallel P\tilde{R}\parallel_2^2+
\delta \parallel \nu^{\frac{1}{2}}(I-P){\tilde R}\parallel_{2}^2+\\
\delta^2\int dv M {|v|}^{4}|(1-P)\tilde R|^2).
\end{eqnarray*}
Here in e.g. the last term, after integrating with respect to $y$, the order of $|v|$ can be reduced by
writing $\nu =-L+K$, and using a standard estimate for $K$ (see \cite{Ma}),
\begin{eqnarray*}
\int (\zeta_{j+\frac{1}{2}}Kf)^2Mdv \leq C\int (\zeta_j f)^2Mdv.
\end{eqnarray*}
Multiply (\ref{R}) by $2RM\zeta_jk(y)$  and integrate over $v$. We get
\begin{eqnarray*}
\frac{\partial }{\partial y}(v_y R,k\zeta_jR))= {2}k[\e^{-1}(\zeta_j R,LR)
+\e^{-1}(\zeta_jR,2J(W,R))+(\zeta_jR,g)].
\end{eqnarray*}
It follows that
\begin{eqnarray*}
\epsilon\int k|v_y|\zeta_j\gamma^-(R)^2Mdv+ \int k\nu\zeta_jR^2Mdvdy =
\int k\zeta_jRK(R)Mdvdy\\
+\epsilon\int
k|v_y|\zeta_j\gamma^+(R)^2Mdv+ 2  \int k\zeta_jRJ(W, R)Mdvdy
+\epsilon \int k\zeta_j gRMdvdy.
\end{eqnarray*}
Hence
\begin{eqnarray}\label{GI0}
 \int \zeta_{j+1}R^2Mdvdy\leq c\Big((\epsilon+\delta) \int \zeta_{j+1}R^2Mdvdy
 +\int \zeta_{j-\frac{1}{2}}R^2Mdvdy\\
  +\epsilon\int k |v_y|\zeta_j\gamma^+(R)^2Mdv+\epsilon\int \zeta_{j-1}g^2Mdvdy\Big).\nonumber
\end{eqnarray}
In the present case $j=  {4}$, and we use (\ref{GI0}) three times and the assumption $\parallel\zeta_{\frac{3}{2}}g\parallel_{2,2}$ is finite. \\
Moreover, $$(\tilde R (1+a),\tilde g)=((I-P)((1+a)\tilde R ),(I-P)\tilde g)\le c\Big[\frac{\eta}{\e}\parallel \nu^{\frac{1}{2}}(I-P)((1+aP)\tilde R )\parallel_{2}^2 $$
$$+\gamma \e\parallel \zeta_{\frac{5}{4}}(I-P)\tilde{ R} \parallel_{2}^2 +{\e}^{\frac{1}{3}} \parallel \zeta_{\frac{3}{4}}(I-P)\tilde g\parallel_{2}^2\Big].$$ Putting all the estimates together,

\begin{eqnarray}\label{GI}
-{\mathcal B}+\frac{c_1}{\e  }\parallel \nu ^{\frac{1}{2}}(I-P)(1+aP){\tilde R} \parallel _{2,2}^2
\leq c\Big[\e^{\frac{1}{3}} \parallel \zeta_{\frac{3}{4}}(I-P)\tilde g\parallel_{2,2}^2
\nonumber\\
+\epsilon^{\frac{4}{3}}\int|v_y|\zeta_4\gamma^+(R)^2Mdv
+\epsilon^{\frac{7}{3}}\parallel \zeta_{\frac{3}{2}}
(I-P)\tilde{g}\parallel^2_{2,2}
+\e\eta \parallel P_J \tilde R\parallel_{2,2}^2\Big],\end{eqnarray}
where
${\mathcal B}:=(v_y(1+a),\tilde R^2(-\pi,v))-(v_y(1+a),\tilde R^2(\pi,v))$.\\
\\
\\
Following the argument in  \cite{EML}, one can show that the first term in ${\mathcal B}$ is bounded by
\begin{eqnarray*}
c\big(\frac{\eta}{\e}\parallel \nu^{\frac{1}{2}}(I-P)(1+aP)\tilde R\parallel^2_{2,2}+\eta \e \parallel P\tilde R \parallel^2_{2,2}+\e^{-3}\parallel\Psi\parallel^2_{2,2,\sim}\big).\quad
\end{eqnarray*}
For the second one, we shall first
simplify by changing the boundary conditions at $\pi$ from diffusive to given ingoing data. The following lemma states the equivalence between the two problems.
\begin{lemma}
Equation (\ref{R})  with the new boundary conditions
\begin{eqnarray}\label{Rbbnew}
R(-\pi,v)&&=\frac{\tilde{M}_-}{M}\int_{- w_y>0}(R(-\pi,w)+\frac{1}{\e}\Psi(-\pi,w))|w_y|Mdw\nonumber\\
&&-\frac{1}{\e}\Psi(-\pi,w), \quad  v_y>0,\\
R(\pi,v)&&=\frac{\tilde{M}_+}{M}\int_{ w_y>0}\frac{1}{\e}\Psi(\pi,w)|w_y|Mdw -\frac{1}{\e}\Psi(\pi,v),\quad  v_y<0, \nonumber
\end{eqnarray}
has the same solution as  problem (\ref{R}), (\ref{Rbb})

\end{lemma}

\begin{proof} We start by noticing that existence and uniqueness for the new  problem are classical.
The main  point is  that the new boundary condition implies
$$ \int_{ v_y>0}R(\pi,v) v_yMdv=0. $$
 In fact, since $\int g M dv=0$, by multiplication of (\ref{R}) by $k(y)M$ and integration over $v\in \R^3$    we get
\begin{equation}\label{K}
\frac{\partial }{\partial y}\left (k(y)\int v_y MR dv\right)=0,
\end{equation}
which implies
$$ k(\pi)\int v_y MR(\pi,v) dv=k(-\pi)\int v_y MR(-\pi,v) dv.$$
Hence, since $\int_{\R^3} v_y MR(-\pi,v)dv=0$, we have also $\int v_y MR(\pi,v) dv=0$.
 We write the  integral in  the left hand side  by using the boundary condition (\ref{Rbbnew})
\begin{eqnarray}0&&=\int_{\R^3} v_y MR(\pi,v)dv=\int_{v_y<0}v_yM_+dv\int_{ w_y>0}\frac{1}{\e}\Psi(\pi,w))w _y Mdw\\
&&+\int_{v_y>0}v_yMR(\pi,v)dv-\int_{v_y<0}v_y\frac{1}{\e}\Psi(\pi,v)
=\int_{v_y>0}v_yMR(\pi,v)dv,
\nonumber
\end{eqnarray}
because $\int_{v_y<0}v_yM_+dv=-1$ and  $\int_{\R^3} v_y\Psi(\pi,v)dv=0$. \newline Hence, $\int_{ v_y>0}R(\pi,v) v_yMdv=0$, and this implies that the unique solution $R$
of  the new problem also satisfies the old boundary conditions (\ref{Rbb}). \end{proof}

\begin{remark} Note that the condition $\int_{ \mathbb{R}^3}R(\pi,v) v_yMdv=0$ is crucial to make this argument work. However, while the old boundary conditions are constructed in such a way that this is true, the new boundary conditions do not ensure that it is automatically satisfied. It is the conservation law (\ref{K}) which ensures the vanishing of the outgoing flux also with the new boundary conditions.
\end{remark}
\hspace{1cm}\\
We now return to the estimate of the second term in $\mathcal B$. Writing\\
\\
$(1+a)(\pi)=1-\delta U_+v_x=(1-\frac{1}{2}\delta U_+v_x)^2-\frac{1}{4}\delta^2U_+^2v_x^2$,\\
\\
it follows that the outgoing part of $-(v_ya(\pi),\tilde R^2(\pi,v))$ is bounded from above by $\mathcal{O}(\delta^2)$. The ingoing part is bounded from above by\\
\\
$\frac{1}{\epsilon^2}\int Mdv (1+\delta U_+|v_x|) |v_y|\tilde\Psi^2(\pi,v).$\\
\\
Replacing in (\ref{GI}) we get

\begin{eqnarray}\label{GI1}
\frac{1}{\e  }\parallel \nu ^{\frac{1}{2}}(I-P)(1+aP){\tilde R} \parallel _{2,2}^2
\leq c\Big[
\e^{\frac{7}{3}} \parallel \zeta_{\frac{3}{2}}(I-P)\tilde g\parallel_{2,2}^2\quad \\
+\eta\e \parallel P \tilde R\parallel_{2,2}^2+\delta^2U_+^2\int_{v_y>0} Mv_x^2v_y\tilde R^2(\pi,v) dv
+\epsilon^{\frac{4}{3}}\int|v_y|\zeta_4\gamma^+(R)^2Mdv\nonumber\\+
\epsilon^{\frac{1}{3}}\parallel \zeta_{\frac{3}{4}}(I-P)\tilde{g}\parallel^2_{2,2}+\frac{c}{\e^3}\parallel\tilde \Psi\parallel^2_{2,2,\sim}\Big]
+\frac{1}{\e^2}\int (1+\delta U_+|v_x|)|v_y|\tilde\Psi^2(\pi,v) Mdv .\nonumber
\end{eqnarray}
Define
\begin{equation}
\label{key1}
k_1(y)=\exp\left (3\int_{-\pi}^y\frac{\e^2}{\delta^2C^2} \sigma(q)dq\right),
\end{equation}
and set $\bar{v}_x:= max\{|v_x|,1\}$. To estimate the outgoing integral of $\tilde R$ in the right hand side, multiply (\ref{R}) by $M R{k_1}v_x^2$, integrate in velocity over the region $v_y\ge q$, then over space using a smooth cut-off function $\chi(y)$ which is zero close to $y=-\pi$, and equal to one close to $\pi$, and finally over $q$ for $q_0\le q\le 0$ for $|q_0|$ small enough. Since we do not integrate over all $v\in\R^3$, we cannot use the spectral inequality to control the terms involving $L_J$, but will use only the boundedness of $L_J$. Notice that norms involving $\sqrt{k} R$ or $\sqrt{k_1} R$ are equivalent to the ones for $R$ since $k(y)$, $k_1(y)$ are functions uniformly bounded  in $\e$. The notation $\tilde{R}$ will be used for both. We obtain
\begin{eqnarray}\label{AAA}
&&\int _{q_0}^0dq\int_{v_y\ge q}dvM \bar{v}_x^2v_y \tilde R^2(\pi,v)\le c  |q_0|\Big[\frac{\delta}{\e}\parallel P \tilde R\parallel_{2,2}^2\nonumber \\
&&+\frac{\eta}{\delta^2\e}\parallel (I-P)(1+aP)\tilde R\parallel_{2,2}^2 +\frac{\e}{\delta}\parallel \nu^{-\frac{1}{2}}(I-P)\tilde g\parallel_{2,2}^2
\\&&+\frac{\delta}{\e}
\parallel\nu^{\frac{1}{2}}(1+v^2)(I-P)\tilde{R}\parallel_{2,2}^2\Big]
+\int_{q_0}^0dq\int_{v_y\ge q}dvdyM \chi'(y)\bar{v}_x^2  v_y \tilde R^2(y,v). \nonumber
\end{eqnarray}
The term on the l.h.s. equals
$$|q_0| \parallel \bar{v}_x\gamma^- \tilde R(\pi)\parallel_{2,2\sim}^2+\int_{q_0}^0 dq \int_{q\le v_y<0} dv \bar{v}_x^2v_y M \tilde R^2(\pi,v),$$
where
$$\gamma^- \tilde R(\pi)=  \tilde R(\pi,v),\quad v_y >0.$$
We move the second term in the previous expression to the r.h.s. of (\ref{AAA}) and bound it, by replacing  $\tilde R$ by the ingoing boundary data, as
\begin{multline}
\left|\int_{q_0}^0 dq \int_{0\le v_y<q} \frac{1}{M}\bar{v}_x^2v_y dv\left[ M_+\int_{w_y\ge 0} dw  w_y M \big(\tilde R(\pi,w)+\frac{1}{\e}\Psi(\pi,w)\big)\right. \right.\\ \left.\left.-M\frac{1}{\e}\Psi(\pi,v)\right]^2\right|
\le c(q_0) \big[\parallel\gamma^- \tilde R(\pi)\parallel_{2,2\sim}^2+\frac{1}{\e^2}\parallel (1+|v|)\tilde\Psi\parallel_{2,2,\sim}^2\big],\end{multline}
with $c(q_0)=o(|q_0|)$.
We replace this estimate in (\ref{AAA}) and divide both sides by $|q_0|$,
\begin{eqnarray}\label{pipi}
&&\parallel v_x\gamma^- \tilde R(\pi)\parallel_{2,2\sim}^2\le c  |\Big[\frac{\delta}{\e}\parallel P \tilde R\parallel_{2,2}^2\nonumber \\
&&+\frac{\eta}{\delta^2\e}\parallel (I-P)(1+aP)\tilde R\parallel_{2,2}^2 +\frac{\e}{\delta}\parallel \nu^{-\frac{1}{2}}(I-P)\tilde g\parallel_{2,2}^2
\\&&+\frac{\delta}{\e}
\parallel\nu^{\frac{1}{2}}(1+v^2)(I-P)\tilde{R}\parallel_{2,2}^2\Big]
+\frac{1}{\e^2}\parallel(1+|v|) \tilde\Psi\parallel_{2,2,\sim}^2. \nonumber
\end{eqnarray}
It is easy to bound also $\gamma^- \tilde R(-\pi)$, and by similar arguments to obtain
\begin{eqnarray}\label{pi}
\parallel\gamma^- \tilde R(-\pi)\parallel_{2,2\sim}^2 \le c  [\eta \parallel P \tilde R\parallel_{2,2}^2\nonumber +\frac{1}{\e}\parallel (I-P)(1+aP)\tilde R\parallel_{2,2}\\ +\frac{1}{\eta}\parallel \nu^{-\frac{1}{2}}(I-P)\tilde g\parallel_{2,2}^2
+\frac{1}{\e^2}\parallel(1+|v|) \tilde\Psi\parallel_{2,2,\sim}^2\big].
\end{eqnarray}
This gives an estimate for the ingoing boundary term in (\ref{GI1}), when using the boundary condition in the form (\ref{Rbbnew}). We also use the bound (\ref{pipi}) for $\parallel\bar{v}_x\gamma^- \tilde R\parallel_{2,2\sim}^2 $ in (\ref{GI1}) to get,
\begin{eqnarray}\label{GI2}
\frac{1}{\e  }\parallel \nu ^{\frac{1}{2}}(I-P)(1+aP){\tilde R} \parallel _{2,2}^2
\leq [\e\gamma^3+\e\eta]\parallel P\tilde R\parallel _{2,2}^2
 \nonumber\\
+\epsilon^{\frac{1}{3}}\parallel \zeta_{\frac{3}{4}}(I-P)\tilde{g}\parallel^2_{2,2}+\frac{1}{\e^3}\parallel (1+|v|)^2 \tilde{ \Psi}\parallel_{2,2,\sim}^2  +\e^{\frac{7}{3}}\parallel (1+|v|)^{\frac{3}{2}}(I-P)\tilde g\parallel_{2,2}^2.\quad\quad
\end{eqnarray}
  \end{proof}
\bigskip

\begin{itemize}
 \item{\bf{ Step $2$}}
 \end{itemize}
 \[\]
We will provide an a priori estimate for the hydrodynamic part of $R$. The equation
 for the remainder is

\begin{equation}\label{R1}
v_y\frac{\partial R}{\partial y}+\frac{\e^2}{\delta^2C^2 }\sigma(y)v_x(v_x\frac{\partial R}{\partial v_y}-v_y\frac{\partial R}{\partial v_x})=\frac{1}{\e }[L_J R+ H_1(R)]+ g
\end{equation}
with the boundary conditions
\begin{eqnarray}\label{Rbbnew1}
R(-\pi,v)&&=\frac{M_-}{M}\int_{- w_y>0}(R(-\pi,w)+\frac{1}{\e}\Psi(-\pi,w))|w_y|Mdw\nonumber\\
&&-\frac{1}{\e}\Psi(-\pi,w), \quad  v_y>0,\\
R(\pi,v)&&=\frac{M_+}{M}\int_{ w_y>0}\frac{1}{\e}\Psi(\pi,w)|w_y|Mdw -\frac{1}{\e}\Psi(\pi,v),\quad  v_y<0. \nonumber
\end{eqnarray}

\begin{lemma}\label{5.1}
If $Pg=0$, then the solution of (\ref{R1}), (\ref{Rbbnew1}) satisfies
\begin{equation}\label{PJ}
\parallel P R\parallel_{2,2}^2\le  c\big[\frac{1}{\e^2}\parallel \nu^{\frac 1 2}(I-P)(1+aP) R\parallel_{2,2}^2 +\parallel  g\parallel_{2,2}^2+\frac{1}{\e^2}\parallel \Psi\parallel_{2,2,\sim}^2\big].
\end{equation}
\end{lemma}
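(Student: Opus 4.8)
The plan is to control the five macroscopic moments of $PR$ by deriving weak balance equations for them directly from (\ref{R1}), and then to invoke a one–dimensional Poincar\'e inequality together with the boundary data. Since we are in the $1$-d stationary situation, I would write $PR=\big(\rho_R+v\cdot u_R+\tfrac{|v|^2-3}{2}\theta_R\big)M$ with scalar fields $\rho_R(y)$, $u_R(y)=(u_{R,x},u_{R,y},u_{R,z})(y)$ and $\theta_R(y)$, so that $\|PR\|_{2,2}^2$ is comparable to the sum of the $L^2(-\pi,\pi)$ norms of these fields.

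First I would extract the gradients of these fields. Testing (\ref{R1}) against the collision invariants $\chi_\alpha$ of (\ref{collinv}) reproduces the local conservation laws: by Remark \ref{rem41} the mass flux already vanishes, while the momentum and energy balances relate $\partial_y$ of suitable second moments of $R$ to the force term and to $g$. To isolate $\partial_y$ of the hydrodynamic fields themselves, I would test (\ref{R1}) against the Burnett-type functions $\mathfrak A,\mathfrak B$ of (\ref{AB}) and their $v_z$ analogues, which lie in $P^\perp\mathcal H$. The point is that for $\phi\in(I-P)\mathcal H$ the collision contribution $\tfrac1\e(\phi,L_JR)=\tfrac1\e(L_J\phi,R)$ only feels $(I-P)(1+aP)R$ and so is bounded by $\tfrac{c}\e\|\nu^{1/2}(I-P)(1+aP)R\|$, whereas the streaming term $(v_y\phi,\partial_y R)=\partial_y(v_y\phi,R)$ produces exactly $\partial_y$ of the macroscopic moments (the $(v_y\phi,PR)$ part) plus a controlled derivative of $(I-P)R$. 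The force term, whose prefactor $\tfrac{\e^2}{\delta^2C^2}=\tfrac{\e^{2/3}}{\gamma^2C^2}$ is small, is made subleading after one velocity integration by parts, using that $\mathcal N$ shifts velocity weights by at most $|v|^2$. Collecting these identities yields
$$\|\partial_y\rho_R\|_{2,2}+\|\partial_y u_R\|_{2,2}+\|\partial_y\theta_R\|_{2,2}\le \frac{c}{\e}\|\nu^{1/2}(I-P)(1+aP)R\|_{2,2}+c\|g\|_{2,2}+(\text{b.t.}),$$
where the boundary terms (b.t.) arise from the $y$–integrations.

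Next I would pin down the boundary values of the fields. The diffuse reflection conditions (\ref{Rbbnew1}) force the incoming trace of $R$ at $y=\mp\pi$ to be proportional to the fixed Maxwellian ratio $M_\mp/M$, with coefficient equal to the outgoing flux plus the exponentially small $\Psi$. In particular the normal velocity satisfies $u_{R,y}(\pm\pi)=0$ (vanishing flux, consistent with Remark \ref{rem41}), while the traces of $\rho_R,u_{R,x},u_{R,z},\theta_R$ are expressed through $\gamma^\pm R$, already estimated in Step $1$ via (\ref{pipi}) and (\ref{pi}), and through $\tfrac1\e\|\Psi\|_{2,2,\sim}$. Feeding these into the Poincar\'e inequality $\|\phi\|_{L^2(-\pi,\pi)}^2\le c(\|\partial_y\phi\|_{L^2}^2+|\phi(-\pi)|^2)$ applied to each field, and absorbing the small prefactors, gives
$$\|PR\|_{2,2}^2\le c\Big[\frac1{\e^2}\|\nu^{1/2}(I-P)(1+aP)R\|_{2,2}^2+\|g\|_{2,2}^2+\frac1{\e^2}\|\Psi\|_{2,2,\sim}^2\Big],$$
which is (\ref{PJ}).

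The step I expect to be the main obstacle is the boundary analysis: the diffuse reflection condition couples all five moments through an integral over the outgoing half-space, so recovering individually controlled boundary traces of $\rho_R,u_R,\theta_R$ — and verifying that only $\tfrac1\e\|\Psi\|_{2,2,\sim}$ together with the Step $1$ trace bounds enters — requires care. One must also check that the centrifugal force contribution and the discrepancy between $P$ and $P_J=P-(I-P)aP$ remain subleading under the scaling $\delta=\gamma\e^{2/3}$, so that replacing $P_J R$ by $PR$ costs only terms already present on the right-hand side.
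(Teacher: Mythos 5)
Your route differs from the paper's: for the part of $PR$ that varies in $y$ the paper works in Fourier space, inverting $i\xi_y v_y$ with a cutoff near $v_y=0$, and reserves the moment/Burnett-function argument for the zero mode alone, whereas you propose to run the moment argument for the full fields and close with a one-dimensional Poincar\'e inequality. That strategy is in principle viable and avoids the small-$|v_y|$ difficulty, but note first a repairable imprecision: testing (\ref{R1}) against a fixed $\phi\in(I-P)\mathbb H$ controls $\partial_y(v_y\phi,R)$, the derivative of the \emph{full} moment, while $\partial_y(v_y\phi,(I-P)R)$ is not separately bounded in $L^2$; so the displayed bound on $\parallel\partial_y\rho_R\parallel+\parallel\partial_y u_R\parallel+\parallel\partial_y\theta_R\parallel$ is not actually available. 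You must instead apply Poincar\'e to $m_\phi(y)=(v_y\phi,R)(y)$ and subtract the non-hydrodynamic contribution only at the level of $L^2$ norms.

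The genuine gap is the boundary trace at $y=\pi$. The values $m_\phi(\pm\pi)$ involve the \emph{outgoing} half-space integrals, which are not prescribed by (\ref{Rbbnew1}), and you propose to bound them by the Step 1 estimates (\ref{pipi}) and (\ref{pi}). But (\ref{pipi}) carries the coefficients $\frac{\delta}{\e}=\gamma\e^{-1/3}$ in front of $\parallel P\tilde R\parallel_{2,2}^2$ and $\frac{\eta}{\delta^2\e}=\frac{\eta}{\gamma^2\e^{7/3}}$ in front of $\parallel(I-P)(1+aP)\tilde R\parallel_{2,2}^2$; it is calibrated to be used in (\ref{GI1}), where it enters multiplied by $\delta^2$. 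Inserted with an $O(1)$ coefficient, as your Poincar\'e step requires, the first term cannot be absorbed into $\parallel PR\parallel_{2,2}^2$ and the second exceeds the $\e^{-2}$ allowed by (\ref{PJ}), destroying the closure with (\ref{GI2tilde}) in Theorem \ref{AE}. The idea missing from your argument is the paper's transfer of the outgoing trace at $\pi$ to the outgoing trace at $-\pi$: one integrates the tested equation a second time over $[-\pi,\pi]\times\{v_y>0\}$, as in (\ref{vx1}), and at $-\pi$ the diffuse-reflection condition applies, so the boundary contributions $\int_{v_y>0}v_xv_y^2M_-\,dv$, $\int_{v_y>0}v_zv_y^2M_-\,dv$ and $\int_{v_y>0}v_y^2\bar A M\,dv$ vanish by oddness in $v_x$, $v_z$, or by the orthogonality built into (\ref{AB1}); no trace estimate is needed for those moments at all. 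Only the $\chi_0$-moment requires a genuine trace bound, and there the paper uses Cauchy--Schwarz together with (\ref{pi}) at $y=-\pi$, whose coefficients ($\eta$ on $\parallel P\tilde R\parallel^2$, order $\e^{-1}$ on the non-hydrodynamic part) are compatible with (\ref{PJ}).
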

\begin{proof}
 Let  $\hat R=\mathcal{F}_y R$ be the Fourier-transform in $y$  of $R$ and $\xi_y\in \mathbb{Z}$ the conjugate variable to $y$. $\hat R$ satisfies the equation
\begin{equation}
iv_y\xi_y{\hat R}+\frac{\e^2}{\delta^2C^2 }\mathcal{F}_y\left\{\sigma(y)v_x(v_x\frac{\partial R}{\partial v_y}-v_y\frac{\partial R}{\partial v_x})\right\}=\e^{-1}(\widehat{L_J  R}+\widehat{ H_1( R)})-v_y r(-1)^{\xi_y}+\hat g
\label{FR}
\end{equation}
with
\begin{equation}
r(v)=
R(\pi,v)-R(-\pi,v).
\label{erre}
\end{equation}
With the notations $< R >$ for the $0$-Fourier term, and $\bar R := R - < R >$, we shall first give an estimate for $\bar R$. We use the notation
\begin{eqnarray*}
 &&\tilde{Z}= \e ^{-1} (\widehat{L_{J} R}+\widehat H_1(R))
\displaystyle{-\frac{\e^2}{\delta^2C^2 }\mathcal{F}_y\left\{\sigma(y)v_x(v_x\frac{\partial R}{\partial v_y}-v_y\frac{\partial R}{\partial v_x})\right\}}
\displaystyle{+\hat{g} -(-1)^{\xi_y}} v_y r,\\
&&\displaystyle{ Z= \e ^{-1} (\widehat{L_{J}R} +\widehat H_1(R))+\hat{g}-(-1)^{\xi_y}} v_y r, \\
&& \displaystyle{ Z'= \e ^{-1}( \widehat{L_{J}R}+\widehat H_1(R))+\hat{g}},\, \displaystyle{ \hat{U}= (i\xi_y v_y)^{-1}}.\end{eqnarray*}
Let $\mathfrak{h}$ be the indicatrix function of the set
$
\{ v\in \R^3; |v_y| <\alpha  \}
$,
for some positive $\alpha $ to be chosen later. For $\xi_y \neq 0 $,
\begin{eqnarray*}
\parallel P(\mathfrak{h} \hat{R }(\xi_y,\, \cdot\,))\parallel  &&\leq c \sum_{j=0}^4\left|\int_{\R^3}dv \mathfrak{h}(v)\hat{R }(\xi_y,v)M(v)\chi_j(v)\right|\\
&&\leq c\parallel \zeta_{-s} \mathfrak{h}\hat{R } \parallel  \sum_{j=0}^4 \parallel \mathfrak{h} \zeta_s{\chi}_j\parallel   \leq c\sqrt{{\alpha }}\parallel \zeta _{-s}\mathfrak{h} \hat{R } \parallel.
\end{eqnarray*}
We use this estimate with the following choice of $\alpha$,
$
\alpha = \parallel \zeta_{-s}\hat{R } \parallel_2 ^{-1}\parallel \zeta_{-s}Z'\parallel $.
We also introduce  an indicatrix function $\mathfrak{h}_1$ with $\alpha={\delta_1}$. We fix $\delta_1 $ so that $c\delta_1 << 1$. Then we find from the above estimate that the
$P$-part of the right-hand side, $\parallel~\hskip -.1cmP( \mathfrak{h}_1\hat{R } )\parallel $, can be  absorbed by  $\parallel P(\mathfrak{h}_1 \hat{R }) \parallel $ in the left-hand side. The estimates hold in the same way when $\mathfrak{h}_1$ is suitably smoothed around  $\sqrt{\delta_1} |\xi|$. For the remaining $\mathfrak{h}^c\mathfrak{h}_1^c\hat{R }=(1-\mathfrak{h})(1-\mathfrak{h})\hat{R }$, we shall use that
$\hat{R} = -\hat{U}\tilde{Z}$. Then
\begin{eqnarray*}
&&\parallel P(\mathfrak{h}^c_1\mathfrak{h}^c\hat{R }(\xi_y,\, \cdot\,) )\parallel ^2 \\&&\leq c\parallel \zeta _{s+2}\mathfrak{h}^c_1\mathfrak{h}^c\hat{U}\parallel ^2
\parallel \zeta _{-s}Z'\parallel ^2 +\frac{\parallel P(\mathfrak{h}^c_1\mathfrak{h}^c v_{y} r)\parallel ^2}{\delta_1|\xi_y|^2}
+ \frac{\e^2}{\delta^2C^2 }  |\Theta| \\
&&\leq \frac{c}{\mid \xi_y \mid ^2 \mid \alpha \mid }
\parallel \zeta _{-s}Z'\parallel ^2
+\frac{\parallel P(\mathfrak{h}^c_1\mathfrak{h}^c v_{y} r)\parallel ^2}{\delta_1|\xi_y|^2}+ \frac{\e^2}{\delta^2C^2 } |\Theta|,
\end{eqnarray*}
with
\begin{align*}\Theta=&\sum_{j=0}^4\int {\chi }_j\mathfrak{h}^c_1\mathfrak{h}^c \hat{U}\mathcal{F}_y\left\{\sigma(y)v_x(v_x\frac{\partial R}{\partial v_y}-v_y\frac{\partial R}{\partial v_x})\right\}\\
&\hspace*{150pt}Mdv \left( \int {\chi} _j\mathfrak{h}^c_1 \mathfrak{h}^c({\hat R}  -\hat{U} Z)M dv\right)^*.\end{align*}
We replace $\alpha $ by
 $\parallel \zeta_{-s}\hat{R } \parallel  ^{-1}\parallel \zeta _{-s}Z'\parallel  $ in the denominator.  That gives
\begin{eqnarray*}
 \parallel P\hat{R }(\xi_y,\, \cdot\,) \parallel ^2 && \leq c(\parallel \zeta _{-s}\hat{R } \parallel  \parallel \zeta _{-s}Z'\parallel   +\frac{\parallel P(\mathfrak{h}^c_1\mathfrak{h}^c v_{y} r)\parallel ^2}{\delta_1|\xi_y|^2}\\&&+\delta_1\parallel \zeta_{-s}(I-P)\hat{R } \parallel  ^2)+ \frac{\e^2}{\delta^2C^2 }|\Theta|.
\end{eqnarray*}
Hence,
\begin{eqnarray*}
&&\parallel P\hat{R } (\xi_y,\, \cdot\,)\parallel ^2  \leq c\Big( (\parallel P\hat{R } \parallel  +\parallel \zeta _{-s}(I-P)\hat{R } \parallel   )_2\parallel \zeta _{-s}Z'\parallel \\ && +\frac{\parallel P(\mathfrak{h}^c_1\mathfrak{h}^c v_{y} r)\parallel ^2}{\delta_1|\xi_y|^2}+\delta_1\parallel \zeta_{-s}(I-P)\hat{R} \parallel ^2\Big )+ \frac{\e^2}{\delta^2C^2 }|\Theta|.
\end{eqnarray*}
Consequently,
\begin{eqnarray*}
 &&\parallel P\hat{R } (\xi_y,\, \cdot\,)\parallel ^2 \leq c\Big(  \parallel \zeta _{-s}Z'\parallel ^2 +\frac{\parallel P(\mathfrak{h}^c_1\mathfrak{h}^c v_{y} r)\parallel ^2}{\delta_1|\xi_y|^2}+\parallel \zeta _{-s}(I-P)\hat{R } \parallel\   \parallel \zeta _{-s}Z'\parallel   \\
&& +\parallel\zeta_{-s}(I-P)\hat{R }\parallel  ^2 \Big)+ \frac{\e^2}{\delta^2C^2 }|\Theta|.
\end{eqnarray*}
We next discuss the term $ \frac{\e^2}{\delta^2C^2 } |\Theta|$. The first integral can be bounded by an integral of a product of $M$, $1+|\xi_y |$, a polynomial in $v $, $\mid  \hat{R }\mid $ and $\hat U$ or $\hat U ^2$. Since, by our choice of $\delta$, we have $\frac{\e^2}{\delta^2C^2 }=\frac{\e^{\frac{2}{3}}}{\gamma^2C^2 }$, this integral is bounded by $\e^{\frac{2}{3}} c \parallel \hat{R } \parallel_2   $. And so,
\begin{eqnarray*}
\parallel P\hat{R } (\xi_y,\, \cdot\,)\parallel  ^2\leq c\Big( \parallel \zeta _{-s}Z'\parallel  ^2+\frac{\parallel P(\mathfrak{h}^c_1\mathfrak{h}^c v_{y} r)\parallel ^2}{\delta_1|\xi_y|^2}+\parallel (I-P)\hat{R } \parallel  ^2\Big) .\\
\end{eqnarray*}
Therefore for $\xi_y\neq 0$,
\begin{eqnarray}
\int |P\hat{R } |^2(\xi_y ,v)Mdv
\leq c\Big( \frac{1}{\e ^2} \parallel \zeta_{-s}(v)\widehat{{L_J}R }( \xi_y , \cdot )\parallel_2  ^2
+\frac{1}{\e ^2} \parallel \zeta_{-s}(v)\widehat{{H_1}R }( \xi_y , \cdot )\parallel_2  ^2\nonumber\\
+\parallel (I-P)\hat{R } (\xi_y ,\cdot )\parallel  ^2
+\frac{\parallel P(\mathfrak{h}^c_1\mathfrak{h}^c v_{y} r)\parallel ^2}{\delta_1|\xi_y|^2}+ \parallel
\nu^{-\frac{1}{2}}\hat{g} (\xi_y ,\cdot )\parallel  ^2 \Big) \hspace{2cm} \label{3.3}.
\end{eqnarray}
The error from evaluating $P$ instead of $P_J$ is of order $\delta  {+\e}$. We remind that  the zero Fourier mode of $\overline{P\hat{R } }$ is zero by definition. Hence,
taking $\e$ small enough and summing over all $ 0\neq\xi_y \in  \mathbb{Z}$, implies by Parseval that
\begin{eqnarray}\label{3.8}
\int (\overline {PR })^2(y,v)Mdvdy
\leq c\Big( \frac{1}{\e ^2}\int \nu ((I-P)(1+aP){R} )^2(y,v)Mdvdy \nonumber \\
+\frac{\parallel P(\mathfrak{h}^c_1\mathfrak{h}^c v_{y} r)\parallel ^2}{\delta_1}+\int \nu ^{-1}{g}^2(y, v)Mdvdy+\delta\parallel PR\parallel^2_{2,2}\Big).
\label{uno}
\end{eqnarray}
The $r$-term can be expressed from (\ref{FR}) at $\xi _y= 0$
\begin{eqnarray}\label{r-estimate}
&&v_y r(v)= \frac{1}{\e }\widehat{{L}_{J}{R}}(0,v)-\frac{\e^2}{\delta^2C^2 }\mathcal{F}_y\left\{\sigma(y)v_x(v_x\frac{\partial R}{\partial v_y}-v_y\frac{\partial R}{\partial v_x})\right\}\Big|_{\xi_y=0}\\&&  +\hat{g}(0,v)+ \frac{1}{\e}\widehat{{ H_1}(R)}(0,v).\nonumber
\end{eqnarray}
Inserting this into (\ref{3.8}) results in
\begin{eqnarray*}
\int (\overline{PR })^2(y,v)Mdvdy
&&\leq c\Big( \frac{1}{\e ^2}\int \nu ((I-P)(1+aP)R)^2(y,v)Mdvdy \nonumber \\
&&+\int \nu ^{-1}g^2(y,v)Mdvdy+\delta \parallel PR\parallel _{2,2}^2\Big) .
\end{eqnarray*}
We are left with the Fourier component $P\hat{R}(\xi_y)$ for $\xi_y=0$,
and have $$P\hat R(0,v)=\sum_{\ell=0}^4\chi_\ell(v)\int dv M\psi_\ell\int dy R(y,v).$$ We discuss each moment separately.
Given two functions  $h(v)$ and  $f(\,\cdot\,, v)$ we use the notation
$f_{h}(\,\cdot\,):=\int dvh(v) f(\cdot,v)$. In particular, for $h=\chi_j$, $j=0,\dots,4$, we also use the notation $f_j$.

\begin{itemize}
\item $v_y$-moment:
\end{itemize}
  {As already noticed in Remark 4.1, m}ultiply (\ref{R1}) by $k(y)M$
and integrate over velocity. Now integrate over $ [-\pi, y]$ and $v\in\R^3$. Since  $\int v_y M(v)R(\pm \pi,v)dv=0$,
we have from (\ref{R1})
 $\hat R_{v_y}(0)=\int v_y M(v)R(y,v)dvdy=0$.
\begin{itemize}
\item $v_x$-moment:
\end{itemize}
We estimate the moment $\hat R_{v_x v^2_y}(0)$ and then use that
\[ \hat R_{v_x}=\hat R_{v_x v^2_y}-\hat R^\perp_{v_x v^2_y},
\]
with $\hat R^\perp=(1-P)\hat R$. \newline
Indeed, $\int  M PR v_xv_y^2dv= \int M R v_xdv$,
 because $\int Mv^2_x v^2_ydv=1$.
Multiply  equation (\ref{R1})
 by $M v_x v_y$, integrate  over $ [y, \pi]$ and $v\in\R^3$ and then integrate over $y\in[-\pi,\pi]$. We get
\begin{eqnarray}\label{vx}
|\int_{-\pi}^\pi dy\int dv v_xv_y^2MR(y,v)dv|\leq |2\pi \int dv v_xv_y^2MR(\pi,v)dv|\\
 +c\left[\e^{\frac{1}{3}}{\gamma^2}\parallel PR\parallel_{2,2} +\frac{1}{\e}\parallel (I-P)(1+aP)R\parallel_{2,2} +\parallel g\parallel_{2,2}\right].
\nonumber
\end{eqnarray}
In the boundary term, the integral over $v_y<0$ is easy because the boundary condition in $\pi$ is given
in terms of $\Psi$. To control the outgoing ($v_y>0$) part, we multiply again  equation (\ref{R1})
 by $2\pi M v_x v_y$, integrate this time  over $ [-\pi, \pi]$ and $v\in\R^3, v_y>0$. We get
\begin{eqnarray}\label{vx1}
&&| \int_{v_y>0} dv v_xv_y^2MR(\pi,v)dv|\leq
| \int_{v_y>0} dv v_xv_y^2MR(-\pi,v)dv|\\ && +c\left[\e^{\frac{1}{3}}{\gamma^2}\parallel PR\parallel_{2,2} +\frac{1}{\e}\parallel (I-P)(1+aP)R\parallel_{2,2} +\parallel g\parallel_{2,2}\right].\nonumber
\end{eqnarray}
To control the second term in  l.h.s. of (\ref{vx1}), we use the boundary condition in $-\pi$.
Finally, we replace the bound of the outgoing part given by equation  (\ref{vx1}) in   (\ref{vx}). The result is
\begin{eqnarray}
|\hat R_{v_x v^2_y}(0)|^2=\left|\int_{-\pi}^\pi dy\int  v_xv_y^2MR(y,v)dv\right|^2\le c[\e^{\frac{2}{3}}{\gamma^4 }\parallel PR\parallel^2_{2,2} \\
+\frac{1}{\e^2}\parallel (I-P)(1+aP)R\parallel_{2,2}^2 +\parallel g\parallel_{2,2}^2+\frac{1}{\e^2}\parallel (1+|v|)^2\Psi\parallel_{2\sim}^2]\nonumber
\end{eqnarray}
and the same estimate holds for $|\hat R_{v_x }(0)|^2$.
\begin{itemize}
\item $v_z$-moment:
\end{itemize} we get the same estimate for the $v_z$-moment and the proof is analogous, the only difference being that we start by multiplying (\ref{R1}) by $Mv_yv_z$.

\begin{itemize}
\item $\chi_4$-moment:
\end{itemize}
We recall $\chi_4=\frac{|v|^2-3}{\sqrt 6}$ and we denote $\hat R_4(0)=\int_{-\pi}^\pi  \chi_4 M Rdvdy$. We notice that
 \begin{eqnarray}
\hat{R}_{v^2_y\bar{A}}(0)=\frac{1}{\sqrt{6}}\hat{R}_{4}(0) \int v^2_y v^2\bar{A}Mdv+\hat{R}^\perp_{ v^2_ y\bar{A}}(0),
\label{3.10}
\end{eqnarray}
where  $\bar{A}$ is the non-hydrodynamic solution to
\begin{eqnarray}\label{AB1}
{L}(v_y\bar{A})= v_y(v^2-5).
\end{eqnarray}
To  control $\hat{R}_{v^2_y\bar{A}}(0)$ we multiply (\ref{R1}) by $Mv_y\bar{A}$ and proceed as before; first, we consider the integral $\int_{-\pi}^\pi dy\int_y^\pi dy' \int_{\R^3} dv $, then we study
$2\pi\int_{-\pi}^\pi dy\int_{v_y>0} dv $ and take the difference. Now, the analogous of the second term
in  (\ref{vx1}) is
$$2\pi \int_{v_y>0} dv v^2_y\bar{A}MR(-\pi,v)dv.$$
We use the boundary condition in $-\pi$, and observe that $\int v^2_y\bar{A}Mdv=0$ by orthogonality. Since the integral is even in $v_y$, also $\int_{v_y>0} v^2_y\bar{A}Mdv=0$.
\begin{itemize}
\item $\chi_0$-moment:
\end{itemize} Since $\displaystyle{\int  dvv_y^2 RMdv = R_0+\frac{2 R_{4}}{\sqrt 6}+\int  dvv_y^2 R^\perp Mdv}$, and we already have estimated $\hat{R}_{4}(0)$, it is enough to estimate the moment $\hat R_{v_y^2}(0)$. To this end, multiply (\ref{R1}) by $v_yM$ and integrate over $[y,\pi]\times \R^3$. Since $v_y$ is in ${\rm Kern}L$, we do not get contribution from the $L$, $H_1$ and $g$ terms in the right hand side. Moreover, by integration by parts, there is no contribution depending on $PR$ in the force term.  We have

\begin{eqnarray*}
-\int dv v_y^2 MR(y,v)dv+ 2\pi \int_{v_y<0} dv v_y^2MR(\pi,v)dv+2\pi \int_{v_y>0} dv v_y^2MR(\pi,v)dv\\
=\frac{\delta}{C^2\gamma^3}\int dv MR(v_y^2-v_x^2).
\end{eqnarray*}
The second term depends on the  ingoing flow at $\pi$, which is given in terms of $\Psi$. To control the  third term, we will as before estimate it using another equation. Multiply (\ref{R1}) by $2\pi v_yM$ and integrate over $[-\pi,\pi]\times \{v\in\R^3: v_y>0\}$.
\begin{eqnarray*}
 &&|\int_{v_y>0} dv v_y^2MR(\pi,v)dv|\leq | \int_{v_y>0} dv v_y^2MR(-\pi,v)dv|\\
&&+ c\left(\frac{\delta}{\gamma^3}\parallel R\parallel_{2,2}+\parallel g\parallel_{2,2} +\frac{1}{\e}\parallel (I-P)(1+aP)R\parallel_{2,2} \right).
\end{eqnarray*}
We get
\begin{eqnarray*}
&&|\int dv v_y^2 MR(y,v)dv|\leq |2\pi \int_{v_y>0}  v_y^2MR(-\pi,v)dv|\\
&& + c\left(\frac{\delta}{\gamma^3}\parallel R\parallel_2+\parallel g\parallel_2 +\frac{1}{\e}\parallel (I-P)(1+aP)R\parallel_2 +\frac{1}{\e}\parallel \Psi\parallel_{2\sim}\right).
\end{eqnarray*}
To estimate the first term to the right, we cannot employ the previous symmetry arguments, but will instead
use Schwartz' inequality
$$\int_{v_y>0}  v_y^2M|R|(-\pi,v)dv\le \left[\int_{v_y>0} v_y^3M\right]^{1/2}\left[\int_{v_y>0}v_yMR^2(-\pi,v)dv\right]^{1/2}.$$
We can now use (\ref{pi}) to estimate the last integral, and get
\begin{eqnarray*}
|\hat R_{v_y^2}|^2\le c [ \eta \parallel P  R\parallel_{2,2}^2\nonumber +\frac{1}{\e^2}\parallel (I-P)(1+aP) R\parallel_{2,2}^2 +\parallel  g\parallel_{2,2}^2+\frac{1}{\e^2}\parallel \Psi\parallel_{2,2,\sim}^2\big].
\end{eqnarray*}
\bigskip
Collecting all the moment estimates for $\xi_y=0$, we get
\begin{equation}
\parallel P\hat R(0)\parallel^2_2\le c \big[ \eta \parallel P  R\parallel_{2,2}^2,\nonumber +\frac{1}{\e^2}\parallel (I-P)(1+aP) R\parallel_{2,2}^2 +\parallel  g\parallel_{2,2}^2+\frac{1}{\e^2}\parallel (1+|v|)^2\Psi\parallel_{2,2,\sim}^2\big]
\end{equation}
and this concludes the proof of Lemma \ref{5.1}.
\end{proof}
\bigskip
\hspace{1cm}\\
Combining the two steps we have proved
\begin{theorem}\label{AE}
Assume $\delta=\e^{\frac{2}{3}}\gamma$, $Pg=0$, and $\parallel (1+|v|)^{\frac{3}{2}}g)\parallel_{2,2}$ finite. Then, for $\gamma$ small enough,  the solution of (\ref{R1}), (\ref{Rbbnew1}) satisfies
\begin{eqnarray}\label{AE1}
\parallel P  R\parallel^2_{2,2}\le c\big( \epsilon^{\frac{4}{3}}\parallel\zeta_{\frac{3}{2}}g\parallel_{2,2}^2+\frac{1}{\e^4}\parallel \zeta_2\Psi\parallel_{2,2,\sim}^2+\e^{-\frac{2}{3}}\parallel \zeta_{\frac{3}{4}}g\parallel^2_{2,2}\big)\nonumber\\
\\
\parallel \nu ^{\frac{1}{2}}(I-P)(1+aP){ R} \parallel _{2,2}^2
\leq c
\big(\frac{1}{\e^2}\parallel\zeta_2 \Psi\parallel_{2,2,\sim}^2  +\e^{\frac{10}{3}}\parallel\zeta_{\frac{3}{2}}g\parallel_{2,2}^2
+\e^{\frac{4}{3}}\parallel\zeta_{\frac{3}{4}}g\parallel_{2,2}^2
\big).\nonumber
\end{eqnarray}

\end{theorem}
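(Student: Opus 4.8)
The plan is to obtain Theorem \ref{AE} by treating the two a priori estimates of Step 1 and Step 2 as a coupled pair of scalar inequalities in the two quantities $\|PR\|_{2,2}^2$ and $\|\nu^{1/2}(I-P)(1+aP)R\|_{2,2}^2$, and to close the loop by an absorption argument that exploits the smallness of $\gamma$ together with the free parameter $\eta$. No estimate beyond (\ref{GI2}) and (\ref{PJ}) is needed; the entire content is the bookkeeping of the powers of $\e$ generated by the scaling $\delta=\gamma\e^{\frac{2}{3}}$ and the verification that the two steps are exactly reciprocal.

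First I would rewrite the Step 1 bound (\ref{GI2}) by multiplying through by $\e$, so that it controls the non-hydrodynamic norm directly (recalling $Pg=0$, so $(I-P)g=g$):
\[
\|\nu^{1/2}(I-P)(1+aP)R\|_{2,2}^2 \le c\big[(\e^2\gamma^3+\e^2\eta)\|PR\|_{2,2}^2 + \e^{4/3}\|\zeta_{3/4}g\|_{2,2}^2 + \tfrac{1}{\e^2}\|\zeta_2\Psi\|_{2,2,\sim}^2 + \e^{10/3}\|\zeta_{3/2}g\|_{2,2}^2\big].
\]
Next I would insert this into the right-hand side of the Step 2 bound (\ref{PJ}),
\[
\|PR\|_{2,2}^2 \le c\big[\tfrac{1}{\e^2}\|\nu^{1/2}(I-P)(1+aP)R\|_{2,2}^2 + \|g\|_{2,2}^2 + \tfrac{1}{\e^2}\|\Psi\|_{2,2,\sim}^2\big].
\]
The decisive feature is that the prefactor $1/\e^2$ in (\ref{PJ}) cancels exactly the $\e^2$ multiplying $\|PR\|_{2,2}^2$ in the rewritten Step 1 bound, so the hydrodynamic norm reappears on the right with coefficient $c(\gamma^3+\eta)$. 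Choosing $\eta$ small and using that $\gamma$ is small makes this coefficient strictly less than $1$, which permits absorbing the term into the left-hand side. Using $\e^{-2/3}\ge 1$ and $\zeta_{3/4}\ge 1$ to dominate the leftover $\|g\|_{2,2}^2$ by $\e^{-2/3}\|\zeta_{3/4}g\|_{2,2}^2$, and $\e^{-2}\|\Psi\|_{2,2,\sim}^2$ by $\e^{-4}\|\zeta_2\Psi\|_{2,2,\sim}^2$, what survives is precisely the first inequality of (\ref{AE1}).

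Finally I would feed this bound for $\|PR\|_{2,2}^2$ back into the rewritten Step 1 estimate; multiplying the three resulting contributions by the small factor $\e^2(\gamma^3+\eta)$ and comparing against the source terms already present shows each one is dominated, yielding the second inequality of (\ref{AE1}). The main obstacle here is not conceptual but arithmetic: one must confirm that the reciprocal scaling (loss $\e^2$ in Step 1, gain $1/\e^2$ in Step 2) leaves behind only the small constants $\gamma^3+\eta$ and no residual negative power of $\e$, which is exactly what the choice $\delta=\gamma\e^{\frac{2}{3}}$ is engineered to guarantee. Care is also required to check that every occurrence of $\delta$ has been correctly converted into its power of $\e$, that the weighted norms of $g$ and $\Psi$ on the right match the stated ones (so that the finiteness hypothesis $\|\zeta_{3/2}g\|_{2,2}<\infty$ inherited from Step 1 is what is actually used), and that all constants produced in the absorption remain uniform in $\e$.
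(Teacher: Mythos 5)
Your proposal is correct and is essentially the paper's own argument: the authors' proof is the one-line instruction to substitute the Step 1 estimate (\ref{GI2tilde}) into the Step 2 estimate (\ref{PJ}) and use $\gamma$ and $\eta$ small, which is precisely the absorption-and-feedback scheme you carry out, with the same power counting under $\delta=\gamma\e^{2/3}$.
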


\begin{proof} Replace (\ref{GI2tilde}) in (\ref{PJ}), and use  $\gamma$ and $\eta$ small. \end{proof}
The step from $L_2$ to $L_\infty$ is done by studying the solution along the characteristics as \cite{AEMN1}. The result is

$$\parallel \nu^{\frac{1}{2}} {\zeta_2}R\parallel^2 _{\infty ,2}
\leq c\Big(
\frac{1}{\e^2 }\parallel  {\zeta_2}R \parallel^2 _{2,2}
+\e^2 \parallel \nu^{-\frac{1}{2}} {\zeta_2}g\parallel^2_{ \infty, 2}
+\frac{1}{\e^2  }\parallel {\zeta_2} \bar{\Psi}\parallel^2_{2,2,\sim}\Big) .
 $$
 Then, by Theorem \ref{AE},
\begin{multline}\label{infty}
\parallel \nu^{\frac{1}{2}} {\zeta_2}R\parallel^2 _{\infty ,2}
\leq c\Big( \e^{-\frac{2}{3} }\parallel \zeta_{\frac{3}{2}}g\parallel^2 _{2,2}
+\e^2 \parallel  {\zeta_{\frac 3 2}}g\parallel^2_{ \infty, 2}
\\+\e^{-\frac{8}{3}}\parallel \zeta_{\frac{3}{4}}g\parallel^2_{2,2}+
\frac{1}{\e^6}\parallel\zeta_2\bar{\Psi}\parallel^2_{2,2,\sim}\Big) .
\end{multline}

\bigskip

With the a priori estimates provided by Theorem \ref{AE}, we are now in the position of proving the existence theorem for the remainder equation (\ref{R}-\ref{Rbb}), by an iteration procedure.

\begin{theorem}
There exists an isolated  solution in $L_{2}( [-\pi,\pi] \times \R^3; Mdvdy)$ to the problem (\ref{remainder2}), (\ref{bcR}).
\end{theorem}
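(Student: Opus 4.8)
The plan is to solve the nonlinear problem (\ref{remainder2})--(\ref{bcR}) by the iteration already announced before Lemma~\ref{5.1}: starting from $R^0=0$, define $R^n$ as the solution of the linear problem (\ref{R})--(\ref{Rbb}) with source
$$g^{n-1}=\mathcal{A}+\e J(R^{n-1},R^{n-1}).$$
First I would check that the hypothesis $Pg^{n-1}=0$ needed to apply the a priori estimates holds at every step. This is the case because $P\mathcal{A}=0$ by (\ref{Pa=0}), while $PJ(f,f)=0$ since $(\chi_j,J(f,f))=\int \chi_j Q(Mf,Mf)\,dv=0$ for the collision invariants $\chi_j$. Existence and uniqueness of each $R^n$ for the linear problem with given $g^{n-1}$ and Maxwell diffuse reflection data is classical (as already invoked in the equivalence lemma), so the sequence is well defined, and each $R^n$ automatically satisfies $R^n_{v_y}=0$ by Remark~\ref{rem41}.

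Second, I would establish uniform bounds on $\{R^n\}$ by combining the $L_2$ estimates of Theorem~\ref{AE} with the $L_\infty$ estimate (\ref{infty}). The exponentially small trace terms $\Psi$ are controlled by Theorem~\ref{fj0}, and $\|\zeta_{3/2}\mathcal{A}\|_{2,2}$, $\|\zeta_{3/2}\mathcal{A}\|_{\infty,2}$ are small powers of $\e$ by the structure of $A$ in (\ref{A}) together with the decay of the $F_n$ and of $\bar b^\pm_{N,\infty}$; under the scaling $\delta=\gamma\e^{2/3}$ the apparently dangerous prefactor $\e^2/(\delta^2C^2)=\e^{2/3}/(\gamma^2C^2)$ stays bounded. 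The new ingredient is the control of the quadratic term, for which I would use a standard bilinear estimate of the form
$$\|\nu^{-1/2}\zeta_{3/2}J(f,f)\|_{2,2}\le c\,\|\zeta_2 f\|_{\infty,2}\,\|\zeta_{3/2} f\|_{2,2}.$$
Together with the single power of $\e$ multiplying $J(R^{n-1},R^{n-1})$ and the inverse powers of $\e$ appearing in Theorem~\ref{AE} and (\ref{infty}), this is arranged so that the map $R^{n-1}\mapsto R^n$ leaves invariant a ball of radius $\rho\sim\e^{a}$ for $\e,\gamma$ small; tracking the powers of $\e$ is exactly where the scaling $\delta=\gamma\e^{2/3}$ is needed for the bootstrap to close.

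Third, I would prove contraction. By bilinearity the difference $D^n=R^n-R^{n-1}$ solves the same linear problem with $\Psi$ set to zero in the boundary condition and source
$$\e\big(J(R^{n-1},R^{n-1})-J(R^{n-2},R^{n-2})\big)=\e\,J\!\big(R^{n-1}+R^{n-2},\,D^{n-1}\big).$$
Applying Theorem~\ref{AE} and (\ref{infty}) to $D^n$, and using the uniform $L_\infty$ bound on $\|R^{n-1}\|+\|R^{n-2}\|$ from the previous step, the extra factor $\e$ produces a contraction constant strictly less than one for $\e,\gamma$ small. Hence $\{R^n\}$ is Cauchy in $L_2([-\pi,\pi]\times\R^3;M\,dv\,dy)$ and converges to a solution $R$ of (\ref{remainder2})--(\ref{bcR}). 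Isolation follows at once from the same contraction estimate: any two solutions lying in the invariant ball have a difference satisfying the contracted linear problem with zero source, hence coincide, so the solution is locally unique.

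The main obstacle I anticipate is the second step, namely closing the bootstrap for the quadratic nonlinearity in the presence of the negative powers of $\e$ (for instance $\e^{-4}\|\zeta_2\Psi\|^2$ and $\e^{-2/3}\|\zeta_{3/4}g\|^2$) in Theorem~\ref{AE} and in (\ref{infty}). The $L_2\to L_\infty$ passage is essential here, since the gain term in $J(R,R)$ cannot be controlled in $L_2$ alone; one really needs the weighted $L_\infty$ bound (\ref{infty}) to estimate $\|\nu^{-1/2}\zeta_{3/2}J(R,R)\|_{2,2}$, after which the single power of $\e$ in front of $J$ must dominate all the inverse powers of $\e$ generated when this source is fed back through the a priori estimates. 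This delicate balance is precisely what forces the choice $\delta=\gamma\e^{2/3}$ together with the smallness of $\gamma$.
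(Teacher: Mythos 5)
Your proposal is correct and follows essentially the same route as the paper: the same iteration $R^0=0$, $R^{n}$ solving the linear problem with source $\mathcal{A}$ plus the quadratic term evaluated at the previous iterate, uniform bounds from Theorem \ref{AE} combined with the $L_\infty$ estimate (\ref{infty}) to control $J(R^{n},R^{n})$, contraction for the differences via the bilinear estimate, and isolation from the same contraction. The paper's quantitative choices ($\|\zeta_2 R^n\|_{\infty,2}\le 2c_1\e^{2/3}$, contraction factor $c_2\e^{1/3}$) instantiate exactly the bootstrap balance you describe.
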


\begin{proof} The remainder term $R$ will be obtained as the limit of the
approximating sequence $R^n$, where $R^0 = 0$ and

\begin{multline}\label{Rn}
v_y\frac{\partial R^{n+1}}{\partial y}+\frac{\e^2}{\delta^2C^2 }\sigma(y)v_x(v_x\frac{\partial R^{n+1}}{\partial v_y}-v_y\frac{\partial R^{n+1}}{\partial v_x})=\frac{1}{\e }[L_J R^{n+1}+ H_1R^{n+1}]\\+ {J}(R^{n},R^n)+\mathcal{A},
\end{multline}
\begin{eqnarray}\label{Rbbn}
R^{n+1}(-\pi,v)&&=\frac{M_-}{M}\int_{- w_y>0}(R^{n+1}(-\pi,w)+\frac{1}{\e}\Psi(-\pi,w))|w_y|Mdw,\nonumber\\
&&-\frac{1}{\e}\Psi(-\pi,w), \quad  v_y>0,\\
R^{n+1}(\pi,v)&&=\frac{M_+}{M}\int_{ w_y>0}\frac{1}{\e}\Psi(\pi,w)|w_y|Mdw -\frac{1}{\e}\Psi(\pi,v),\quad  v_y<0 .\nonumber
\end{eqnarray}
Here $\mathcal{A}$ is of order  $\e^4$ with $(1+|v|)^{\frac{3}{2}}\mathcal{A}$ quadratically integrable  {by Theorem\ref{fj0}}, and it holds that $R^{n+1}$ satisfies also the boundary conditions (\ref{Rbb}). The function $R^1$ is solution to (\ref{R1}-\ref{Rbbnew1}) with $g=\mathcal{A}$.
Then, by using the  a priori estimates of Theorem \ref{AE}, (\ref{GI0}), and (\ref{infty}), together with the exponential decrease of ${\Psi}$, we obtain, for some constant $c_1$,
\begin{eqnarray*}
\parallel \zeta_2R^1\parallel _{\infty ,2}\leq c_1\e^{\frac{2}{3}},\quad \parallel \zeta_2R^1\parallel _{2,2}\leq c_1 \e^{\frac{5}{3}}.
\end{eqnarray*}
By induction  for $\e$ sufficiently small,
\begin{eqnarray*}
&&\parallel \zeta_2 R^n\parallel _{\infty ,2}\leq 2c_1 {\e}^{\frac{2}{3}} , \\
&&\parallel\zeta_2(R^{n+1}-R^n)\parallel _{2,2}\leq c_2{\e^{\frac{1}{3}}}\parallel\zeta_2( R^n-R^{n-1})\parallel _{2,2},\quad n\geq 1,
\end{eqnarray*}
for some constant $c_2$. Namely,
\begin{eqnarray*}
&&v_y\frac{\partial (R^{n+2}-R^{n+1})}{\partial y}+\frac{\e^2}{\delta^2C^2 }\sigma(y)v_x(v_x\frac{\partial (R^{n+2}-R^{n+1})}{\partial v_y}-v_y\frac{\partial (R^{n+2}-R^{n+1})}{\partial v_x}) \\
&&= \frac{1}{\e }L_J(R^{n+2}-R^{n+1})+\frac{1}{\e }H_1(R^{n+2}-R^{n+1})
+G^{n+1},\\
&&(R^{n+2}-R^{n+1})(-\pi,v)=\frac{M_\mp}{M}\int _{w_z\lessgtr 0} (R^{n+2}-R^{n+1})(-\pi,w)|w_y|M_-dw,
\, v_y> 0,
 \\&& (R^{n+2}-R^{n+1})(\pi,v)=0, \, v_y< 0.
\end{eqnarray*}
Here, $
G^{n+1}= (I-P)G^{n+1}= {J}(R^{n+1}+R^n,R^{n+1}-R^n).
$ It follows that
\begin{eqnarray*}
&&\parallel \zeta_2( R^{n+2}-R^{n+1})\parallel _{2,2}\leq {c} \e^{-\frac{1}{3}}\parallel \zeta_1G^{n+1}\parallel _{2,2}\\
&&\hskip1.5cm\leq {c}\e^{-\frac{1}{3}}\Big( \parallel \zeta_2R^{n+1}\parallel _{\infty ,2}+\parallel \zeta_2 R^{n}\parallel _{\infty ,2}\Big) \parallel \zeta_2( R^{n+1}-R^{n})\parallel _{2,2}\\
&&\hskip1.5cm\leq c_2\e^{\frac{1}{3}}\parallel\zeta_2( R^{n+1}-R^{n})\parallel _{2,2}.
\end{eqnarray*}
Consequently,
\begin{eqnarray*}
\parallel \zeta_2R^{n+2}\parallel _{2,2}&&\leq \parallel \zeta_2(R^{n+2}-R^{n+1})\parallel _{2,2}+...+\parallel \zeta_2(R^{2}-R^{1})\parallel _{2,2}\\&&
+\parallel \zeta_2R^1\parallel _{2,2}
\leq 2c_1\e ^{\frac{5}{3}} ,
\end{eqnarray*}
for $\e $ small enough. Similarly, $\parallel \zeta_2R^{n+2} \parallel _{\infty,2}\leq 2c_1 {\e}^{\frac{2}{3}}$. In particular $(R^n)$ is a Cauchy sequence in $L_M^{2}( [-\pi,\pi]^2 \times \R^3)$. The existence of a solution ${R}$ to (\ref{remainder2}) follows. Local uniqueness follows along the same path.
\end{proof}

\begin{corollary}\label {co2}
There exists an isolated, non-negative $L_2$-solution $F$ to (\ref{basic1}), (\ref{bc1})  such that
\[\parallel M^{-1}[F-M_\delta]\parallel_{2,2}\le c\gamma\e^{\frac{4}{3}}.\]
\vskip .5cm
\end{corollary}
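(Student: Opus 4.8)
The plan is to read the Corollary off from the existence theorem for the remainder, which already supplies an \emph{isolated} $L_2$-solution $R$ of (\ref{remainder2}),(\ref{bcR}) together with the quantitative bounds $\parallel\zeta_2 R\parallel_{2,2}\le 2c_1\e^{5/3}$ and $\parallel\zeta_2 R\parallel_{\infty,2}\le 2c_1\e^{2/3}$ coming from the iteration. First I would set
\[ F=M_\delta+\Phi+\e MR,\qquad \Phi=\sum_{n=1}^5\e^n F_n, \]
and observe that, by the very derivation of Section 4.1, this $F$ is an exact solution of (\ref{basic1}) with the diffuse-reflection boundary conditions (\ref{bc1}). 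Indeed the bulk/boundary-layer construction makes $M_\delta+\Phi$ solve (\ref{basic1}) up to the source $A$ and the exponentially small boundary defects $\Psi_{n,\e}$ of Theorem \ref{fj0}, while (\ref{remainder2}),(\ref{bcR}) are precisely the equation and boundary data that $R$ must satisfy so that $\e MR$ cancels both; the conservation identity $R_{v_y}=0$ of Remark \ref{rem41} is what guarantees that the rewritten data (\ref{Rbbnew}) is equivalent to genuine diffuse reflection, so that no radial flux is created at the walls. Isolatedness of $F$ is then inherited from the local uniqueness of $R$, since $R\mapsto F$ is an affine bijection.

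Next I would prove the quantitative estimate by writing $M^{-1}[F-M_\delta]=\sum_{n=1}^5\e^n M^{-1}F_n+\e R$. The remainder contributes $\parallel\e R\parallel_{2,2}\le\e\parallel\zeta_2 R\parallel_{2,2}\le 2c_1\e^{8/3}$, which is of higher order and absorbed. For the expansion part the leading term $\e M^{-1}F_1$ is controlled by the structure of $B_1$ in (\ref{B1}): its non-hydrodynamic part is $\mathcal O(\delta)$, and in the present $1$-d laminar stationary situation the order-$\e$ hydrodynamic corrections entering $B_1^\parallel$ are themselves $\mathcal O(\delta)$ — this is exactly the content of the expansion of $W$, where $\e F_1 M^{-1}=\e P(B_1/M)+\mathcal O(\e\delta)$ — so that $\parallel\e M^{-1}F_1\parallel_{2,2}\le c\,\e\delta=c\gamma\e^{5/3}$, the terms $\e^n M^{-1}F_n$ with $n\ge2$ being of still higher order by Theorem \ref{fj0}. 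Collecting, $\parallel M^{-1}[F-M_\delta]\parallel_{2,2}\le c\gamma\e^{5/3}+2c_1\e^{8/3}\le c\gamma\e^{4/3}$ for $\e$ small, which is the asserted bound with room to spare; combined with Corollary \ref{cor} it also yields the $M_\delta^0$-version of the Main Theorem.

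The remaining and genuinely delicate point is non-negativity, which none of the $L_2$ estimates detects, and which I expect to be the main obstacle. The issue is that (\ref{infty}) and the Step 1--2 bounds live in the mixed norm $\parallel\cdot\parallel_{\infty,2}$ (uniform in $y$, $L_2$ in $v$), and this does not by itself give pointwise positivity. I would therefore first upgrade the control of $R$ to a genuine $L^\infty_{y,v}$ bound by running the Duhamel/characteristics bootstrap as in \cite{AEMN1}, using the weight $\zeta_2$ and the standard gain estimate $\int(\zeta_{j+\frac{1}{2}}Kf)^2Mdv\le C\int(\zeta_j f)^2Mdv$, to obtain $\parallel(1+|v|)^k M^{-1}(F-M_\delta)\parallel_{L^\infty_{y,v}}\le c\e^{s}$ with some $s>0$. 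Since $M_\delta\ge cM$ and the polynomial weight dominates the Gaussian factor, this forces $F=M_\delta\bigl(1+M_\delta^{-1}(F-M_\delta)\bigr)\ge M_\delta(1-c\e^{s})>0$ for $\e$ small; one checks consistency by noting that the diffuse-reflection operator, an average against the positive Maxwellians $\tilde M_\pm$, and the nonnegative gain part of $Q$ make the mild formulation of (\ref{basic1}),(\ref{bc1}) positivity preserving, so its nonnegative fixed point is exactly the $F$ just constructed.
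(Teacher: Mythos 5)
Your construction of $F=M_\delta+\Phi+\e MR$, the inheritance of isolatedness from the local uniqueness of $R$, and the $L_2$ estimate (remainder of order $\e^{8/3}$ from $\parallel\zeta_2R\parallel_{2,2}\le 2c_1\e^{5/3}$, expansion of order $\e\delta=\gamma\e^{5/3}\le\gamma\e^{4/3}$) are exactly how the paper reads this corollary off: it is stated without a separate proof precisely because it is the immediate combination of Theorem \ref{fj0} with the existence theorem for the remainder, and your bookkeeping matches the $c(\e\delta+\delta^2)$ estimate used later in the proof of Theorem \ref{main1}.

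The genuine gap is in the non-negativity argument. All the available bounds control $M^{-1}(F-M_\delta)$, i.e.\ the perturbation measured against the \emph{global} Maxwellian $M$, whereas pointwise positivity would require smallness of $M_\delta^{-1}(F-M_\delta)$. The inequality $M_\delta\ge cM$ that you invoke is false uniformly in $v$: because of the drift in $M_\delta$ one has $M_\delta/M= e^{\delta U v_x+O(\delta^2)(1+|v|^2)}$, which tends to $0$ as $v_x\to-\infty$ (for $U>0$). Hence an estimate $|F-M_\delta|\le c\,\e^{s}(1+|v|)^{-k}M$ does not exclude $F<0$ in the region $|v_x|\gtrsim \delta^{-1}\log(1/\e)$, where $M_\delta$ is already exponentially smaller than $\e^{s}M$; a polynomial weight cannot beat that tail. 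This is why the paper does not argue by pointwise domination: non-negativity is obtained, in the proof of Theorem \ref{main1}, by the method of \cite{AN1}--\cite{AN4}, namely by producing a non-negative solution through a positivity-preserving (mild-form, gain-term) iteration and identifying it with the isolated $L_2$ solution by local uniqueness. That is essentially the ``consistency check'' you mention in passing, but it has to carry the entire burden of the positivity proof; the $L^\infty_{y,v}$ upgrade with polynomial weights cannot.
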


\begin{proof}[Proof of Theorem \ref{main1}]
 The non-negativity can be proved similarly to \cite{AN1}--\cite{AN4}. Then,
by Corollary \ref{cor} and \ref{co2} we have, for $q=2,\infty$
\begin{eqnarray*}&&\parallel M^{-1}[F-M(1,1,(\delta U,0,0))]\parallel_{q,2}\le \parallel M^{-1}[F-M_\delta]\parallel_{q,2}\\
&&\qquad+\parallel M^{-1}[M_\delta-M(1,1,(\delta U,0,0))]\parallel_{q,2}\le c(\e\delta +\delta^2)\end{eqnarray*}
which implies (\ref{basic2}) by taking into account the relation $\delta=\gamma\e^{\frac{2}{3}}$.
\end{proof}

\vskip 1 cm

\medskip

\end{document}